\newcommand{\overbar}[1]{\mkern 3mu\overline{\mkern-3mu#1\mkern-3mu}\mkern 3mu}
\newcommand{\vdashx}[1]{\mathrel{\vdash\mkern-7mu_{\raisebox{-0.3ex}{\scriptsize $#1$}}}}
\begin{document}
\title{Inclusion with repetitions and Boolean constants -- implication problems revisited}
\titlerunning{Inclusion with repetitions and Boolean constants}
% If the paper title is too long for the running head, you can set
% an abbreviated paper title here
%
\author{Matilda Häggblom\orcidID{0009-0003-6289-7853} }
\authorrunning{M. Häggblom}
% First names ar e abbreviated in the running head.
% If there are more than two authors, 'et al.' is used.
%
\institute{University of Helsinki, Finland \\
\email{}}
\maketitle              % typeset the header of the contribution
\begin{abstract}
Inclusion dependencies form one of the most widely used dependency classes. We extend existing results on the axiomatization and computational complexity of their implication problem to two extended variants. We present an alternative completeness proof for standard inclusion dependencies and extend it to inclusion dependencies with repetitions that can express equalities between attributes.  The proof uses only two values, enabling us to work in the Boolean setting.
Furthermore, we study inclusion dependencies with Boolean constants, provide a complete axiomatization and show that no such system is $k$-ary. Additionally, the decision problems for both extended versions remain PSPACE-complete. The extended inclusion dependencies examined are common in team semantics, which serves as the formal framework for the results.

\keywords{Inclusion dependency  \and Boolean inclusion \and  Implication problem \and  Axiomatization \and Team semantics \and Computational complexity.}
\end{abstract}

\section{Introduction}

Inclusion dependencies have been extensively studied in database theory due to their widespread use. A fundamental question for dependencies is the implication problem: Does a set of dependencies imply a given dependency? The axiomatization and computational complexity of the implication problem for standard inclusion dependencies were established in \cite{CASANOVA198429}. In this work, we identify and address gaps in the literature concerning the implication problem for natural variants of inclusion dependencies, namely inclusion in the Boolean setting, inclusion with repetitions, and inclusion with Boolean constants.

The extended variants of the inclusion dependencies are common in the team semantic literature, making it a convenient setting for presenting the results. Team semantics was introduced in \cite{hodges1997,hodges19972} and further developed in \cite{vaananen2007}. In particular, the inclusion dependencies were adapted to the team semantic setting in \cite{GALLIANI201268}, where teams correspond to uni-relation databases, variables to attributes, and atoms to dependencies. Consequently, the results of this paper are immediately transferable between the database and team semantic setting.

An inclusion atom $x\subseteq y$, where $x$ and $y$ are finite sequences of variables of equal length, is satisfied in a team if the values of $x$ are among those of $y$. In database theory, it is common to prohibit repetitions of variables within the variable sequences $x$ and $y$. Removing this restriction, as often done in team semantics, enables inclusion atoms to express equalities such as $x=y$ via $xy\subseteq xx$. Examining this class of atoms is thus a natural consideration. 

Furthermore, we generalize the syntax in the Boolean setting by allowing the constants $\top$ and $\bot$ within the sequences. This enables us to specify the inclusion of precise values. For instance, we express that a proposition $p$ must be true somewhere by $\top\subseteq p$. Additionally, contradictions like $\top\subseteq \bot$ can be expressed, which is not possible with standard inclusion atoms. 

To illustrate the difference in expressive power between the three classes of inclusion atoms, we provide simple natural language examples in the Boolean setting, where variables take only the values $0$ and $1$,  
Consider the team $T=\{s_{2021},s_{2022},s_{2023}\}$, and let $\{p_a,p_b\}$ be a set of two propositional symbols, for which $s_{2021}(p_a)=1$ if and only if student $a$ passed their exams in $2021$, etc.
\begin{enumerate}[align=left]
\item[\textbf{Without repetitions:}] $T\models p_a\subseteq p_b$ expresses that during the three years, student $a$ failing one year means that also student $b$ failed at least once. Conversely, if student $a$ passed one year, then student $b$ must have passed at least once.
    
    \item[\textbf{With repetitions:}] $T\models p_ap_b\subseteq p_ap_a$ expresses that each year, student $b$ passed their exams if and only if student $a$ did. Note that this is not implied by $T\models p_a\subseteq p_b$ and $T\models p_b\subseteq p_a$, since the students' similar results can be witnessed at different years.

     \item[\textbf{With Boolean constants:}] 
$T\models p_a\bot\subseteq \top p_b$ expresses that student $a$ passed their exams every year, while student $b$ failed at least once. 
\end{enumerate}

Although the inclusion atoms we consider have been studied within the larger language of propositional inclusion logic \cite{yang_propositional_2022}, their axiomatization from \cite{CASANOVA198429} has previously not been confirmed complete in the Boolean setting. We address this gap in the literature by providing an alternative completeness proof using only two values. Indeed, the systems for repetition-free inclusion atoms remains complete. This result is not to be taken for granted, since, for instance, any complete system for exclusion atoms in the Boolean setting necessarily extends their usual system from \cite{Casanova1983TowardsAS}, as noted in \cite{haggblom2024}. 

Without a restriction on the number of possible values, inclusion atoms have Armstrong relations \cite{inclARMST}, i.e., for any set of inclusion atoms there exists a team that satisfies exactly the atoms entailed by the set. We prove that this is not the case in the Boolean setting, showing that despite the proof systems being the same, some basic semantic properties differ.

Inclusion with repetitions has been considered in \cite{mitchell} together with functional dependencies. We extract the inclusion rules from their system and prove completeness. Furthermore, we confirm that the implication problem for finite sets of atoms with repetitions is PSPACE-complete, as in the repetition-free case \cite{CASANOVA198429}.

Team-based propositional logic can be extended with dependency atoms to obtain propositional dependence logic and propositional inclusion logic. Dependence atoms (based on functional dependencies) are downward-closed, meaning that if a team satisfies an atom, so do its subteams. Adding their Boolean versions to propositional logic forms an expressively complete logic for all downward-closed team properties \cite{MR3488885}. However, for propositional inclusion logic, the situation is different. Inclusion atoms are not downward-closed, but union-closed, i.e., if multiple teams satisfy the same atom, so does their union. Adding only their Boolean variants to propositional logic does not yield an expressively complete logic for all union-closed team properties (with the empty team), as noted in \cite{yang_propositional_2022}: over one variable, $p\subseteq p$ is a tautology and does not extend the expressivity. Instead, to achieve an expressively complete logic, we require inclusion atoms with Boolean constants. Note that the axiomatization of propositional inclusion logic in \cite{yang_propositional_2022} does not give the rules for inclusion atoms with Boolean constants simpliciter, since the rules in \cite{yang_propositional_2022} make use of the connectives of the language. 

We introduce a complete axiomatization of inclusion with Boolean constants and show that no such system can be $k$-ary. A $k$-ary system requires at most $k$ assumptions in any of its rules. This result contrasts with the complete systems for the other inclusion atoms considered, which are $2$-ary.
Still, for this extended class, the decision problem for finite sets of atoms is PSPACE-complete. 

The paper is organized as follows: We recall the basic definitions and results of inclusion atoms in \Cref{sec prel}, and their repetition-free version in \Cref{sec rep free}. We examine the implication problem for (Boolean) inclusion atoms with repetitions in \Cref{sec inc}, and establish corresponding results for inclusion atoms with Boolean constants in \Cref{sec bool}. We conclude with a summary and directions for future research in \Cref{sec conc}.

\section{Preliminaries}\label{sec prel}

We recall the basic definitions for teams and inclusion atoms, see, e.g., \cite{GALLIANI201268} for a reference. 

A team $T$ is a set of assignments $s:\mathcal{V}\longrightarrow M$, where $\mathcal{V}$ is a set of variables and $M$ is a set of values. We write $x_i,y_i,\dots$ for individual variables and $x,y,\dots$ for finite (possibly empty) sequences of variables. Given a sequence $x$, we denote its i:th variable by $x_i$. Let $x= x_1\dots x_n$ and $y= y_1\dots y_m$. We write $s(x)$ as shorthand for the tuple $s(x_1)\dots s(x_n)$. The concatenation $xy$ is the sequence $ x_1\dots x_n y_1\dots y_m$. The sequences $x$ and $y$ are equal if and only if they are of the same length, denoted $|x|=|y|$, and $x_1=y_1,\dots, x_n=y_n$. We let $V(x)$ denote the set of variables appearing in the sequence $x$. 

Inclusion atoms are of the form $x\subseteq y$, where $|x|=|y|$ is the arity of the atom. We often refer to the variables in $x$ as the variables on the left-hand side, etc. We recall the semantics of inclusion atoms, 
\begin{equation*}
     T\models x\subseteq y \text{ if and only if for all } s\in T \text{ there exists } s^\prime\in T : s(x)=s^\prime(y). 
\end{equation*}

Let $T[x]:=\{s(x)\mid s\in T\}$. The definition of $T\models x\subseteq y$ expresses that any value of $x$ in $T$ is a value of $y$ in $T$, giving us the equivalent semantic clause: $$T\models x\subseteq y \text{ if and only if }T[x]\subseteq T[y].$$

Inclusion atoms are union-closed: if $T_i\models x\subseteq y$ for each $i$ in a nonempty index set $I$, then $\bigcup_{I\in I}T_i\models x\subseteq y$. They also have the empty team property: all inclusion atoms are satisfied by the empty team. We call inclusion atoms of the form $x\subseteq x$ \emph{trivial}, since they are satisfied in all teams. 

We write $\Sigma\models u\subseteq v$ if for all teams $T$, whenever $T\models x\subseteq y$ for all $x\subseteq y\in \Sigma$, $T\models u\subseteq v$. If $\Sigma=\{x\subseteq y\}$ is a singleton, we drop the brackets and write $x\subseteq y\models u\subseteq v$. 

%If both $x\subseteq y\models u\subseteq v$ and $u\subseteq v\models x\subseteq y$, we say that $u\subseteq v$ and $x\subseteq y$ are semantically equivalent and write $x\subseteq y\equiv u\subseteq v$.

For inclusion atoms in the Boolean setting, we denote variable sequences by $p,q,\dots$ and restrict the values of individual variables $p_i,q_i,\dots$ to $0$ and $1$. The semantic clause for $T\models p\subseteq q$ and the above-mentioned union closure and empty team properties are as in the general case.

\section{Repetition-free inclusion} \label{sec rep free}

We say that an inclusion atom $x\subseteq y$ is \emph{repetition-free} if there are no repeated variables within the sequences $x$ and $y$, respectively. A repetition-free inclusion atom is $x_1\subseteq x_1$, while $x_1x_1\subseteq y_1y_2$ is not. In the database literature, inclusion dependencies are often defined to be repetition-free. % and have been studied extensively. 

We recall the complete system for repetition-free inclusion atoms.

\begin{definition}
\label{Inclusion usual rules}
    [\cite{CASANOVA198429}]
    The rules for repetition-free inclusion atoms are $I1$-$I4$. 
\begin{enumerate}[align=left]
\item[$(I1)$] $x\subseteq x$.

\item[$(I2)$] If $x\subseteq z$ and $z\subseteq y$, then $x\subseteq y$.

\item[$(I3)$] If $xyz\subseteq uvw$, then $xzy\subseteq uwv$.$^*$ \hfill $^*$$|x|=|u|$ and $|y|=|v|$

\item[$(I4)$] If $xy\subseteq uv$, then $x\subseteq u$.
\end{enumerate}
\end{definition} 

For a set of rules $R$, we write $\Sigma\vdashx{R} x\subseteq y$ when the rules we apply in the derivation are restricted to the ones in $R$. If $R$ is a singleton, we omit the brackets, and if $R=\{I1-I4\}$, we write $\Sigma\vdash x\subseteq y$. We follow this convention for all types of inclusion atoms and their respective proof systems.

\begin{theorem}[Completeness \cite{CASANOVA198429}]
    Let $\Sigma\cup\{x\subseteq y\}$ be a set of repetion free inclusion atoms. $\Sigma\models x\subseteq y$ if and only if $\Sigma \vdash x\subseteq y$.
\end{theorem}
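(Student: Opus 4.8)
The plan is to prove the nontrivial direction, soundness being a routine check that each rule $I1$--$I4$ preserves satisfaction in every team. For completeness, I would argue by contraposition: assuming $\Sigma\not\vdash x\subseteq y$, I construct a team $T$ that satisfies every atom in $\Sigma$ but falsifies $x\subseteq y$. The standard approach (going back to \cite{CASANOVA198429}) is a chase-style construction, but since the paper advertises an alternative proof using only two values, the idea is to build $T$ over the value set $\{0,1\}$. Concretely, I would let $a=s(x)$ be a ``generic'' tuple witnessing the left-hand side and then saturate: start with a single assignment giving $x$ some fixed pattern of $0$s and $1$s (taking care that repeated variables, if any were allowed — but here atoms are repetition-free — get consistent values), and repeatedly add assignments forced by the atoms of $\Sigma$ via the inclusion semantics, i.e. whenever a current assignment $s$ and an atom $u\subseteq v\in\Sigma$ require some $s'$ with $s'(v)=s(u)$, add such an $s'$.

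The key steps, in order, would be: (1) Fix the target atom $x\subseteq y$ with $x=x_1\cdots x_n$. Define an initial assignment $s_0$ on the relevant variables by choosing a tuple $c\in\{0,1\}^n$ for $s_0(x)$ that is ``injective enough'' to detect the failure — since we only have two values and $n$ may exceed $2$, the real device is to track, for each assignment $s$ in the team, which subset of coordinate-positions of $x$ it ``came from'', rather than relying on distinct values. (2) Close the team under the forcing rule for $\Sigma$: this is a finite process because there are finitely many variables in $\Sigma\cup\{x\subseteq y\}$ and only two values, so only finitely many assignments exist. (3) Verify $T\models\Sigma$: by construction every demand imposed by an atom of $\Sigma$ has been met. (4) Verify $T\not\models x\subseteq y$: show that the tuple $s_0(x)$ does not appear as $s'(y)$ for any $s'\in T$; this is where one invokes $\Sigma\not\vdash x\subseteq y$, arguing that if some $s'\in T$ had $s'(y)=s_0(x)$, then tracing back through the forcing steps that produced $s'$ would yield a derivation of $x\subseteq y$ from $\Sigma$ using exactly $I1$ (the seed), $I2$ (chaining the forcing steps), and $I3$--$I4$ (rearranging and projecting coordinates to match the subsequences actually used at each step).

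The main obstacle, and the place where care is needed, is step (4): making precise the correspondence between ``a coordinate-tracking path through the saturated two-valued team'' and ``a formal $\vdash$-derivation.'' With an unbounded value set this is transparent (each chase step is literally an application of $I2$ after projecting via $I4$ and permuting via $I3$), but with only $\{0,1\}$ one cannot use values as names for positions, so one must carry along a labelling of positions and show the labelling is preserved by the forcing steps in a way that mirrors the syntactic manipulations permitted by $I3$ and $I4$. I would handle this by defining the team not over bare $\{0,1\}$-assignments but over assignments together with a bookkeeping component (or equivalently, index the constructed assignments by the sequences-of-variables that generated them), then erase the bookkeeping at the end using union closure and the fact that projections of a satisfying team still satisfy the projected atoms. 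Once that dictionary is set up, soundness of the dictionary in both directions gives: a $y$-occurrence of the seed tuple in $T$ $\iff$ $\Sigma\vdash x\subseteq y$, completing the contrapositive and hence the theorem.
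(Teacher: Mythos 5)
Your core plan is the classical chase argument of Casanova--Fagin--Vardi: seed a team with one assignment realizing $x$, close under the ``for all $s$ there exists $s'$'' demands of $\Sigma$, and read a $\vdash$-derivation off the forcing history. Over an unbounded value set (where each forcing step can use genuinely fresh values outside $V(v)$) this is sound and proves the theorem as stated, so the overall strategy is viable. However, it is \emph{not} the route the paper takes, and your attempt to execute it over $\{0,1\}$ has a concrete unresolved gap at exactly the point you flag. Your fix is to carry a bookkeeping/labelling component and ``erase the bookkeeping at the end using union closure and the fact that projections of a satisfying team still satisfy the projected atoms.'' Those two facts only go one way: applying a value-collapsing map $h$ to a team preserves $T\models u\subseteq v$ for every atom, but it does \emph{not} preserve $T\not\models x\subseteq y$. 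Identifying labelled values can manufacture a new assignment $s'$ with $s'(y)=s_0(x)$ that had no counterpart before the collapse, which is precisely how a legitimate unbounded-domain countermodel can cease to be a countermodel in the Boolean setting. Nothing in your sketch rules this out, and ``soundness of the dictionary in both directions'' is asserted rather than argued; this is the entire difficulty of the two-valued case, so the proposal does not yet deliver it. (A smaller issue: $\Sigma$ need not be finite, so termination of the saturation needs the restriction to the finitely many variables of $\Sigma\cup\{x\subseteq y\}$ plus a bound on the bookkeeping, which you should make explicit.)

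The paper avoids the seed-and-saturate pattern altogether. Its counterexample team is carved \emph{down} from the generic team of \emph{all} assignments $s:V\to\{0,1\}$: one deletes exactly those $s$ with $s(w)=0^n$ for some $w$ with $\Sigma\vdash w\subseteq y$. Then $0^n$ is missing from $T[y]$ but present in $T[x]$ (using $\Sigma\not\vdash x\subseteq y$), so $T\not\models x\subseteq y$ by construction, and there is no erasure step at which a spurious witness could appear. The price is paid on the other side: showing $T\models u\subseteq v$ for $u\subseteq v\in\Sigma$ is a syntactic argument --- from $u\subseteq v$ one derives $u'\subseteq w$ by $I3$ and $I4$ for each relevant $w$ with $V(w)\subseteq V(v)$, chains with $I2$ to get $\Sigma\vdash u'\subseteq y$, and concludes that any tuple removed from $T[v]$ is also removed from $T[u]$. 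In short: you put the derivation-extraction work into refuting $x\subseteq y$, the paper puts it into verifying $\Sigma$; the paper's direction is what makes the two-valued restriction harmless. To salvage your version you would have to prove that the label-erasure never creates a $y$-witness for the seed tuple, which in effect reproves the paper's condition on derivable sequences $w\subseteq y$.
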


 In \Cref{complincl} of the next section, we extend the completeness result to the Boolean setting.

\section{Inclusion with repetitions} \label{sec inc}

We remove the restriction on repetition of variables and consider \emph{inclusion atoms}, common in the team semantics literature. They are syntactically and semantically extended versions of their repetition-free counterparts.

The rules for inclusion atoms are presented in \cite{mitchell} as part of a larger system for inclusion and functional dependencies. We extract the rules for inclusion atoms and show that the system is complete also in the Boolean setting.

\begin{definition} \label{Inclusion full rules}
    [\cite{mitchell}]
The rules for inclusion atoms are $I1$-$I6$.
\begin{enumerate}[align=left]
\item[$(I1)$] $x\subseteq x$.

\item[$(I2)$] If $x\subseteq z$ and $z\subseteq y$, then $x\subseteq y$.

\item[$(I3)$] If $xyz\subseteq uvw$, then $xzy\subseteq uwv$.$^*$ \hfill $^*$$|x|=|u|$ and $|y|=|v|$

\item[$(I4)$] If $xy\subseteq uv$, then $x\subseteq u$.

\item[$(I5)$] If $xy\subseteq uv$, then $xyy\subseteq uvv$.

\item[$(I6)$] If $x_1x_2\subseteq y_1y_1$ and $z\subseteq vx_2$, then $z\subseteq vx_1$.
\end{enumerate}
\end{definition}

The system extends the one for repetition-free inclusion atoms with the rules $I5$ and $I6$. 

For a set $\Sigma$ of inclusion atoms, we write $x_i\equiv y_j$ to express the \emph{derivable equality} $x_iy_j\subseteq u_lu_l$ whenever $\Sigma\vdash x_iy_j\subseteq u_lu_l$ for some variable $u_l$. We note that in the case of the actual equality $x_1=x_2$, we derive $\vdashx{I1} x_1x_1\subseteq x_1x_1$, hence $x_1\equiv x_1$ and by $x_1=x_2$, $x_1\equiv x_2$. If $\Sigma$ is such that  $x_i\equiv y_j$ is not a derivable equality, we write $x_i\not\equiv y_j$.
In fact, it is not difficult to see that the language of inclusion atoms with repetitions and the language of repetition-free inclusion atoms with (actual) equalities are equally expressive. 

% as shorthand for the existence 
% We call atoms of the form $x_1x_2\subseteq  y_1y_1$ \emph{derivable equalities}, and for ease of notation, we sometimes refer to them by the equality $x_1\equiv x_2$. 

The system in \cite{mitchell} has a more general substitution rule than $I6$, allowing substitution of equal variables on both sides. We show that it is derivable. Further, given a set of equalities between variables, we can reduce any inclusion atom to an equivalent one without repetitions on the right-hand side.

\begin{lemma}\label{substNF}
\begin{enumerate}[align=left, label=(\roman*)]
    \item Substitution on the left-hand side is derivable, i.e.,  $x_1x_2\subseteq y_1y_1, zx_2\subseteq v\vdash zx_1\subseteq v$.\label{substLHS}

        \item Substitution on the right-hand side is derivable from substitution on the left-hand side. 
    %This as a rule would derive $I6$.
    \item    Let $\Gamma$ be the set of all derivable equalities from  $u\subseteq v$. Then $u\subseteq v$ is provably equivalent to an atom with distinct variables on the right-hand side, together with $\Gamma$. In particular, $u_1u_2u_3\subseteq y_1y_1v_3\dashv\vdash u_1u_3\subseteq y_1v_3,\, u_1u_2\subseteq y_1y_1$. 
\end{enumerate}
\end{lemma}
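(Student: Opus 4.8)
The plan is to prove the three parts in order, using each to bootstrap the next.

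\textbf{Part (i).} I want to show $x_1x_2\subseteq y_1y_1,\ zx_2\subseteq v \vdash zx_1\subseteq v$. The obstacle is that $I6$ as stated only lets me substitute $x_2$ for $x_1$ when $x_2$ occurs at the \emph{end} of the right-hand side of the second atom (it has the shape $z\subseteq vx_2 \leadsto z\subseteq vx_1$), whereas here $x_2$ may occur anywhere in $v$, possibly multiple times, and on the wrong side. So first I would use $I3$ (permutation) to move a single chosen occurrence of $x_2$ in $v$ to the last coordinate: from $zx_2\subseteq v$ derive $zx_2 \subseteq v'x_2$ where $v'$ is $v$ with that occurrence deleted (adjusting the left-hand side by the same permutation, which is harmless since the left side is just $zx_2$ with $x_2$ already placed appropriately — I'd permute so that the $x_2$ on the left lines up). Then apply $I6$ with the premise $x_1x_2\subseteq y_1y_1$ to rewrite that trailing $x_2$ to $x_1$, giving $zx_2\subseteq v'x_1$, then permute back via $I3$. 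Iterating over all occurrences of $x_2$ in $v$ (finitely many) yields $zx_2\subseteq v[x_2/x_1]$... but wait, I actually want $zx_1\subseteq v$, i.e., substitution on the \emph{left}. For that I also need the symmetric direction: from $x_1x_2\subseteq y_1y_1$ I can derive $x_2x_1\subseteq y_1y_1$ by $I3$, so the derivable-equality relation is symmetric, and I can use $I6$-style rewriting to replace $x_2$ by $x_1$ wherever needed. To change the left-hand side, note $zx_2\subseteq v$ together with $x_1\equiv x_2$: apply $I6$ reading it as rewriting an occurrence — here the trick is that $I6$ already rewrites on the right-hand side of $z\subseteq vx_2$; to get left-hand rewriting I transpose roles. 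Concretely, from $zx_1\subseteq zx_1$ ($I1$) and the equality, $I6$ gives... I would instead observe that $zx_2 \subseteq v$ and $x_1 x_2 \subseteq y_1y_1$ let me derive $x_1 \subseteq $ (something) and glue with $I2$. I expect the clean route is: treat $x_1\subseteq x_2$-style rewriting on the left as a consequence of rewriting on the right applied to a trivial atom plus $I2$-transitivity. This juggling is the main technical obstacle.

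\textbf{Part (ii).} Given part (i), substitution on the right-hand side should follow by a short argument: if $x_1\equiv x_2$ (i.e. $x_1x_2\subseteq y_1y_1$ derivable) and $z\subseteq vx_2$, I want $z\subseteq vx_1$, which is literally $I6$; but the point is the \emph{general} substitution rule of \cite{mitchell} that substitutes simultaneously on both sides and at arbitrary positions. For this I would first use $I3$ to isolate occurrences (as in part (i)), handle right-hand occurrences by $I6$/part-(i)-style moves, handle left-hand occurrences by part (i), and recombine with $I3$. Since every step is one of $I1$–$I6$ plus the already-established left substitution, this is routine bookkeeping over the finitely many occurrences.

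\textbf{Part (iii).} Here I would argue by induction on the number of repeated variables on the right-hand side of $u\subseteq v$. If $v$ has a repetition, then up to $I3$-permutation the atom has the form $u_1u_2u_3\subseteq y_1y_1v_3$ (with $|u_1|=|u_2|=1$ and $y_1$ the repeated variable, $v_3$ collecting the rest); the displayed special case $u_1u_2u_3\subseteq y_1y_1v_3 \dashv\vdash u_1u_3\subseteq y_1v_3,\ u_1u_2\subseteq y_1y_1$ is the inductive step. For the $\vdash$ direction: $I4$ (after $I3$) extracts $u_1u_2\subseteq y_1y_1$, and $I4/I3$ also gives $u_1u_3\subseteq y_1v_3$ by deleting the second coordinate. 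For the $\dashv$ direction: from $u_1u_3\subseteq y_1v_3$ and $u_1u_2\subseteq y_1y_1$ I first get $u_1u_2\subseteq y_1y_1$ hence $u_2\equiv u_1$ via... actually $u_1u_2 \subseteq y_1 y_1$ says $u_1\equiv u_2$; then apply $I5$ to $u_1u_3\subseteq y_1v_3$ to duplicate the first coordinate: $u_1u_1u_3\subseteq y_1y_1v_3$, and then use the substitution from part (i)/(ii) with $u_1\equiv u_2$ to replace the second $u_1$ by $u_2$, yielding $u_1u_2u_3\subseteq y_1y_1v_3$. Peeling off one repetition at a time and collecting all the extracted two-variable atoms into $\Gamma$ (noting each is a derivable equality, and that these are exactly all derivable equalities from $u\subseteq v$) completes the induction. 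The main obstacle across the whole lemma is part (i); once the left-substitution rule is in hand, parts (ii) and (iii) are assembly.
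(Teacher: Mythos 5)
Your proposal lands on the paper's argument, but part (i) as written stops one step short of a proof: the ``clean route'' you name in your final sentence --- left substitution obtained as right substitution on a trivial atom followed by transitivity --- is exactly the paper's derivation, and the long detour before it (permuting individual occurrences of $x_2$ inside $v$ and rewriting them one at a time) is unnecessary. Concretely: $I1$ gives $zx_1\subseteq zx_1$; reading this as $z'\subseteq v'x_1$ with $z'=zx_1$ and $v'=z$, rule $I6$ together with the equality $x_2x_1\subseteq y_1y_1$ (obtained from $x_1x_2\subseteq y_1y_1$ by $I3$) rewrites the trailing $x_1$ on the right to $x_2$, yielding $zx_1\subseteq zx_2$; then $I2$ applied to $zx_1\subseteq zx_2$ and $zx_2\subseteq v$ gives $zx_1\subseteq v$. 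No occurrence-by-occurrence rewriting of $v$ is needed, since transitivity transfers the entire right-hand side at once --- this is why the ``juggling'' you flag as the main obstacle dissolves. For (ii) your reading differs slightly from the paper's: the paper shows that $I6$ becomes derivable if the left-substitution rule of (i) is taken as primitive in its place (by the mirror image of the same trivial-atom-plus-$I2$ trick), whereas you assemble the general both-sides, arbitrary-position substitution of Mitchell from $I3$, $I6$ and (i); both are correct and rest on the same observation. Part (iii) matches the paper's proof step for step --- indeed you correctly invoke $I5$ (with $I3$) to duplicate the first coordinate, where the paper's text misattributes that step to $I3$ alone.
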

\begin{proof}
\begin{enumerate}[align=left,label=(\roman*)]
    \item  We have $\vdashx{I1} zx_1\subseteq zx_1$ and $x_1x_2\subseteq y_1y_1, zx_1\subseteq zx_1\vdashx{I6} zx_1\subseteq zx_2$. We conclude $zx_1\subseteq zx_2,zx_2\subseteq v\vdashx{I2} zx_1\subseteq v$.

    \item Suppose that we have ``If $x_1x_2\subseteq y_1y_1$ and $ zx_2\subseteq v$, then $zx_1\subseteq v$'' as a rule instead of $I6$ in the system. Then $I6$ is derivable, with a similar proof to that of \cref{substLHS}.
    
    \item W.l.o.g., consider the atom $u\subseteq v:= u_1u_2u'\subseteq y_1y_1v'$. 
We have that $u_1u_2u'\subseteq y_1y_1v'$ derives both $u_1u'\subseteq y_1v'$ and $\Gamma$ by $I4$. 
For the other direction, clearly $u_1u_2\subseteq y_1y_1\in \Gamma$, and
%since $u_1u_2u'\subseteq y_1y_1v'\vdashx{I4} u_1u_2\subseteq y_1y_1$.
%
 $u_1u'\subseteq y_1v'\vdashx{I3} u_1u_1u'\subseteq y_1y_1v'$
 . Putting these together, 
$u_1u_2\subseteq y_1y_1,\, u_1u_1u'\subseteq y_1y_1v' \vdash u_1u_2u'\subseteq y_1y_1v'$  by \cref{substLHS}. 
Hence $u_1u_2u'\subseteq y_1y_1v'\dashv\vdash u_1u'\subseteq y_1v',\Gamma$. 
Iterate this process with the atom $u_1u'\subseteq y_1v'$ and the set $\Gamma$ until there are no more repeated variables in the atom's right-hand side.  

% Consider $u\subseteq v$ and there is an atom derivable from it using $I4$ that has no repeated variables on its right hand side, consider the largest such and denote it $x\subseteq y$. It remains to show that $\Gamma, $

\label{NF}
\end{enumerate}
\qed\end{proof}

We are now ready to provide an alternative completeness proof to the ones in the literature \cite{CASANOVA198429,mitchell}. This alternative proof uses only two values; thus, the completeness result in the Boolean setting follows immediately. 
For any set $\Sigma$ of inclusion atoms, its \emph{generic team} consists of all assignments $s:V\longrightarrow \{0,1\}$ that respect the derivable equalities, and therefore satisfies all atoms in $\Sigma$. In case $\Sigma$ contains infinitely many variables, its generic team is infinite.  We modify the generic teams slightly to obtain the counterexample teams, with \Cref{ct1} and \Cref{ct2} providing two examples. Denote by $0^n$ the $n$-tuple $00\dots 0$.

\begin{theorem}[Completeness]\label{complincl} Let $\Sigma\cup\{x\subseteq y\}$ be a set of inclusion atoms with $|x|=|y|=n$. If $\Sigma\models x\subseteq y$, then $\Sigma\vdash x\subseteq y$.
\end{theorem}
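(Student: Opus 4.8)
The plan is to prove the contrapositive: assuming $\Sigma\nvdash x\subseteq y$, I will build a single team $T$ (a modification of the generic team) that satisfies every atom in $\Sigma$ but falsifies $x\subseteq y$. Since the construction will only use the values $0$ and $1$, the Boolean case follows with no extra work. First I would invoke \Cref{substNF}\ref{NF} to put everything in a normal form: replace $x\subseteq y$ and each atom of $\Sigma$ by a provably equivalent atom with \emph{distinct} variables on the right-hand side, together with the set $\Gamma$ of derivable equalities. So without loss of generality I may assume $y=y_1\dots y_n$ has no repeated variables, and I can reason modulo $\equiv$ (treating $\equiv$-classes of variables as single ``points'', which is legitimate because the generic team by definition respects derivable equalities). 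Note if some $x_i\equiv x_j$ but $y_i\not\equiv y_j$ then $\Sigma\nvdash x\subseteq y$ is automatic and the counterexample is easy (any assignment respecting equalities with $y_i\ne y_j$ at the relevant point), so the interesting case is when $x$ respects at least the equalities forced among the $y$'s.

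The core of the argument is the counterexample team. Fix the target assignment: let $s_0$ be the assignment sending $x$ to $0^n$, i.e.\ $s_0(x_i)=0$ for all $i$, extended consistently to all variables so that it respects $\Gamma$ — this is possible precisely because, in the reduced setting, the equalities among the $x_i$ are compatible. The team will be $T=\{s_0\}\cup T'$, where $T'$ is (a subteam of) the generic team consisting of assignments that respect $\Gamma$, chosen large enough to satisfy all of $\Sigma$ but small enough that no assignment $s'\in T$ has $s'(y)=0^n$. Concretely, I would take $T'$ to contain, for each ``reachable'' tuple, a witnessing assignment, where reachability is defined via provability: a variable $v$ is ``marked'' if $\Sigma\vdash x'\subseteq y'$ for suitable atoms connecting $v$ to the left-hand side $x$ — more precisely, I would set, for each variable $z$, the value $s_0(z)=0$ iff $z\equiv x_i$ for some $i$ (or more generally iff $z$ is ``derivably below'' some $x_i$), and then let $T'$ be all assignments respecting $\Gamma$ that assign $1$ to at least one variable in $V(y)$ — adjusting so that every atom of $\Sigma$ is satisfied. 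The generic team already satisfies $\Sigma$ by construction, and the union-closure of inclusion atoms lets me freely add assignments; the delicate point is ensuring $s_0(y)\notin T[y]$, i.e.\ that $0^n$ is witnessed as a value of $y$ by no assignment in $T$ — equivalently that no assignment respecting $\Gamma$ sends all of $y_1,\dots,y_n$ to $0$ \emph{and} lies in $T$, which comes down to showing that if it did, we could derive $x\subseteq y$, contradicting our assumption.

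Thus the heart of the proof — and the main obstacle — is the following combinatorial/proof-theoretic lemma: $\Sigma\vdash x\subseteq y$ if and only if $0^n\in T[y]$ for the team $T$ just described; equivalently, if the assignment sending $x\mapsto 0^n$ forces, through the atoms of $\Sigma$ and the rules $I1$–$I6$, some assignment in the generic team to send $y\mapsto 0^n$. The forward direction is routine (soundness, which I would note holds for all six rules). For the converse I would argue contrapositively: track which tuples are ``derivably reachable'' from $x$ using $\Sigma$, show this reachability relation is closed under exactly the operations licensed by $I2$ (transitivity), $I3$ (permutation), $I4$ (projection), $I5$ (duplication) and $I6$ (substitution of $\equiv$-equal variables), and conclude that $0^n$ being reachable as a value of $y$ yields a derivation of $x\subseteq y$. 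The bookkeeping with repetitions — making sure $I5$ and $I6$ interact correctly with the $\equiv$-classes, and that the normal-form reduction from \Cref{substNF} is faithful in both directions — is where the real care is needed; once the reachability characterization is in place, the team $T=\{s_0\}\cup(\text{generic team restricted away from }y\mapsto 0^n)$ is a counterexample and the theorem follows. I expect \Cref{ct1} and \Cref{ct2} (referenced but not shown in the excerpt) are the worked instances of exactly this construction, and I would use them to calibrate the general definition of $T$.
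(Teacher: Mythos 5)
Your overall strategy is the paper's: work over $\{0,1\}$, start from the generic team of all assignments respecting the derivable equalities, split off a special case involving equalities forced among the $y_i$'s, and otherwise delete assignments so that $0^n$ survives as a value of $x$ but not of $y$. The execution, however, has a genuine gap at exactly the step you label ``the heart of the proof.'' The team you actually write down --- all $\Gamma$-respecting assignments giving some variable of $V(y)$ the value $1$, plus $s_0$ --- does not satisfy $\Sigma$ in general: already for $\Sigma=\{z\subseteq y\}$ with $x,y,z$ distinct unary variables (so $\Sigma\not\vdash x\subseteq y$), your team has $0\in T[z]$ but $0\notin T[y]$, hence violates the assumption. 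The repair is the paper's condition \ref{cond 3}: one must forbid $s(w)=0^n$ for \emph{every} $w$ with $\Sigma\vdash w\subseteq y$, not only for $y$ itself, and then verify (a) that this derivability-closed deletion still leaves every $u\subseteq v\in\Sigma$ satisfied --- the paper does this by deriving $u\subseteq v\vdash u'\subseteq w$ via $I3$--$I5$, chaining with $I2$ to get $\Sigma\vdash u'\subseteq y$, and concluding that $u'$ is protected by the same condition --- and (b) that $0^n$ nevertheless remains a value of $x$, using $\Sigma\not\vdash x\subseteq y$. Your phrases ``adjusting so that every atom of $\Sigma$ is satisfied'' and the unproved ``reachability lemma'' are precisely where all of this work lives; as written, the central step is asserted rather than carried out.

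A second, smaller error: your special case is inverted. From $x_i\equiv x_j$ and $y_i\not\equiv y_j$ it does \emph{not} follow that $\Sigma\not\vdash x\subseteq y$; take $\Sigma=\{x_1x_1\subseteq y_1y_2\}$, which derives $x_1x_1\subseteq y_1y_2$ trivially although $x_1\equiv x_1$ and $y_1\not\equiv y_2$. The case that must be split off is the opposite one, $y_j\equiv y_k$ with $x_j\not\equiv x_k$: then every $\Gamma$-respecting assignment has $s(y_j)=s(y_k)$ while some has $s(x_j)\neq s(x_k)$, so the unmodified generic team already refutes $x\subseteq y$ and condition \ref{cond 3} is not imposed at all. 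Your closing remark that the interesting case is when $x$ respects the equalities forced among the $y$'s reaches the right place, but the justification preceding it is false as stated, and the parenthetical counterexample you sketch for the special case does not describe a working team.
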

\begin{proof}
If $x=y$, the conclusion follows by $I1$. Assume $x\neq y$ and $\Sigma\not\vdash x\subseteq y$. We define the counterexample team $T$ by $s\in T$ if and only if the following conditions are satisfied:
\begin{enumerate}[label=(\arabic*)]
    \item  $s:V\longrightarrow \{0,1\}$,\label{cond 1}
    \item If $\Sigma\vdash u_i u_j\subseteq u_ku_k$, then $s(u_i)= s(u_j)$.\label{cond 2}
\item[] If there are indecies $j,k$ such that $y_j\equiv y_k$ and $x_j\not\equiv  x_k$, then we stop here. Otherwise, demand also: 
    \item For $w$ such that $\Sigma\vdash w\subseteq y$: $s(w)\neq 0^n$.\label{cond 3}
\end{enumerate}

The decision to choose $0^n$ specifically in condition \ref{cond 3} is motivated by sequences like $0^n$ never being at risk of removal by condition \ref{cond 2}, ensuring that $0^n\in T[x]$.
Now condition \ref{cond 3} implies that $T \not\models x\subseteq y$. 

Let $u\subseteq v\in \Sigma$, we show that $T \models u\subseteq v$.
First, consider the cases where the team is constructed without condition \ref{cond 3}, or when $V(w)\not\subseteq V(v)$ for all $w$ such that $\Sigma\vdash w\subseteq y$. 
If $v$ does not contain variables affected by condition \ref{cond 2}, then all $01$-combinations are in $T[v]$ and $T\models u\subseteq v$ follows. 
If there are variables in $v$ such that for all $s\in T$, $s(v_i)= s(v_j)$, by $I2$ also $s(u_i)= s(u_j)$ for all $s\in T$. Now any value missing from $T[v]$ is also missing from $T[u]$, hence $T\models u\subseteq v$.

Else, for each $w$, $V(w)\subseteq V(v)$, such that $\Sigma\vdash w\subseteq y$, we have $0^n\not\in T[w]$ by condition \ref{cond 3}. We derive $u\subseteq v \vdash u'\subseteq w$ by $I3$, $I4$ and $I5$, where $V(u')\subseteq V(u)$. By $I2$, $\Sigma\vdash u'\subseteq y$. Since $\Sigma\not\vdash x\subseteq y$, we have that $u'\neq x$ also after substituting derivable equalities using $I6$ and \Cref{substNF} \cref{substLHS}. Therefore, by condition \ref{cond 3}, $0^n\not\in T[u']$. The effect of condition \ref{cond 2} is like in the previous case, hence we conclude $T\models u\subseteq v$. 
\qed\end{proof}

For repetition-free inclusion atoms, the completeness proof of \Cref{complincl} without condition \ref{cond 2} suffices for the repetition-free system $I1$-$I4$.

%\vspace{-.35cm}
\begin{table}[ht]
\caption{Illustrated in the four tables are all possible assignments to the values $0$ and $1$ over the variables in $\{x_1,x_2,y_1,y_2,z_1\}$. The counterexample team, according to the proof of \Cref{complincl}, for the consequence $x_1x_2\subseteq y_1 y_2$ and assumption set $\Sigma=\{y_1y_2\subseteq z_1z_1,\,  x_1z_1\subseteq z_1z_1\}$ consists of all non-crossed out lines. Since $y_1\equiv y_2$ while $x_1\not\equiv x_2$, condition \ref{cond 3} is not used in the construction of the team.} \label{ct1}
\parbox{.24\textwidth}{
\centering
$\begin{array}{ccccc}
\toprule	
x_1 & x_2 &y_1& y_2 &z_1 \\ 
 \midrule
1 & 1 &1& 1 &1 \\ 
 1& 0 &1& 1 &1 \\ 
\rowcolor{black!15} \tikzmark{start17} 0& 1 &1& 1 &1 \tikzmark{start18}\\ 
\rowcolor{black!15}0 & 0 &1& 1 &1\\ 
\rowcolor{black!15} 1 & 1 &1& 1 &0 \\ 
\rowcolor{black!15} \tikzmark{end18}1 & 0 &1& 1 &0 \tikzmark{end17}\\ 
0 & 1 &1& 1 &0 \\ 
0 & 0 &1& 1 &0 \\ 
\bottomrule 
\end{array}$ 
  \tikz[remember picture] \draw[overlay] ([yshift=.35em]pic cs:start17) -- ([yshift=.35em]pic cs:end17);
   \tikz[remember picture] \draw[overlay] ([yshift=.35em]pic cs:start18) -- ([yshift=.35em]pic cs:end18);
}
\parbox{.25\textwidth}{
\centering
$\begin{array}{ccccc}
\toprule	
x_1 & x_2 &y_1& y_2 &z_1 \\ 
 \midrule
\rowcolor{black!15}\tikzmark{start15}1 & 1 &1& 0 &1 \tikzmark{start16}\\ 
\rowcolor{black!15} 1& 0 &1& 0 &1 \\ 
\rowcolor{black!15} 0& 1 &1& 0 &1 \\ 
\rowcolor{black!15}0 & 0 &1& 0 &1\\ 

\rowcolor{black!15}1 & 1 &1& 0 &0 \\ 
\rowcolor{black!15} 1 & 0 &1& 0 &0 \\ 
\rowcolor{black!15}0 & 1 &1&0 &0 \\ 
\rowcolor{black!15}\tikzmark{end16}0 & 0 &1& 0 &0 \tikzmark{end15}\\ 
\bottomrule 
\end{array}$ 
  \tikz[remember picture] \draw[overlay] ([yshift=.35em]pic cs:start15) -- ([yshift=.35em]pic cs:end15);
   \tikz[remember picture] \draw[overlay] ([yshift=.35em]pic cs:start16) -- ([yshift=.35em]pic cs:end16);
}
\parbox{.25\textwidth}{
\centering
$\begin{array}{ccccc}
\toprule	
x_1 & x_2 &y_1& y_2 &z_1 \\ 
 \midrule
\rowcolor{black!15}\tikzmark{start13}1 & 1 &0 &1 &1 \tikzmark{start14}\\ 
\rowcolor{black!15}1 & 0 &0 &1 &1 \\ 
\rowcolor{black!15}0 & 1 &0 &1 &1 \\ 
\rowcolor{black!15}0 & 0 &0 &1 &1 \\ 
\rowcolor{black!15}1 & 1 &0 &1 &0 \\ 
\rowcolor{black!15}1 & 0 &0 &1 &0 \\ 
\rowcolor{black!15}0 & 1 &0 &1 &0 \\ 
\rowcolor{black!15}\tikzmark{end14}0 & 0 &0 &1 &0 \tikzmark{end13}\\ 
\bottomrule 
\end{array}$ 
  \tikz[remember picture] \draw[overlay] ([yshift=.35em]pic cs:start13) -- ([yshift=.35em]pic cs:end13);
   \tikz[remember picture] \draw[overlay] ([yshift=.35em]pic cs:start14) -- ([yshift=.35em]pic cs:end14);
}
\parbox{.24\textwidth}{
\centering
$\begin{array}{ccccc}
\toprule	
x_1 & x_2 &y_1& y_2 &z_1 \\ 
 \midrule
1 & 1 &0& 0 &1 \\ 
1 & 0 &0& 0 &1 \\ 
\rowcolor{black!15}\tikzmark{start11}0 & 1 &0& 0 &1 \tikzmark{start12}\\ 
\rowcolor{black!15}0 & 0 &0& 0 &1 \\ 
\rowcolor{black!15}1 & 1 &0& 0 &0 \\ 
\rowcolor{black!15}\tikzmark{end12}1 & 0 &0& 0 &0 \tikzmark{end11}\\ 
0 & 1 &0& 0 &0 \\ 
0 & 0 &0& 0 &0 \\ 
\bottomrule 
\end{array}$ 
  \tikz[remember picture] \draw[overlay] ([yshift=.35em]pic cs:start11) -- ([yshift=.35em]pic cs:end11);
   \tikz[remember picture] \draw[overlay] ([yshift=.35em]pic cs:start12) -- ([yshift=.35em]pic cs:end12);
}
\end{table}
% 
% 
%\vspace{-.35cm}
\begin{table}[ht]
\caption{The counterexample team as per the proof of \Cref{complincl} for the consequence $x_1x_2\subseteq y_1y_2$ and assumption set $\Sigma=\{y_2z_1\subseteq z_1z_1, \, x_1z_1\subseteq y_1y_2\}$ consists of all non-crossed out lines from the four subteams illustrated below. 
}  \label{ct2}
\parbox{.24\textwidth}{
\centering
$\begin{array}{ccccc}
\toprule	
x_1 & x_2 &y_1& y_2 &z_1 \\ 
 \midrule
1 & 1 &1& 1 &1 \\ 
 1& 0 &1& 1 &1 \\ 
 0& 1 &1& 1 &1 \\ 
0 & 0 &1& 1 &1\\ 

\rowcolor{black!15}\tikzmark{start9}1 & 1 &1& 1 &0 \tikzmark{start10}\\ 
\rowcolor{black!15} 1 & 0 &1& 1 &0 \\ 
\rowcolor{black!15}0 & 1 &1& 1 &0 \\ 
\rowcolor{black!15}\tikzmark{end10}0 & 0 &1& 1 &0 \tikzmark{end9}\\ 
\bottomrule 
\end{array}$ 
  \tikz[remember picture] \draw[overlay] ([yshift=.35em]pic cs:start9) -- ([yshift=.35em]pic cs:end9);
   \tikz[remember picture] \draw[overlay] ([yshift=.35em]pic cs:start10) -- ([yshift=.35em]pic cs:end10);
}
\parbox{.25\textwidth}{
\centering
$\begin{array}{ccccc}
\toprule	
x_1 & x_2 &y_1& y_2 &z_1 \\ 
 \midrule
\rowcolor{black!15} \tikzmark{start5}1 & 1 &1& 0 &1 \tikzmark{start6}\\ 
\rowcolor{black!15} 1& 0 &1& 0 &1 \\ 
\rowcolor{black!15} 0& 1 &1& 0 &1 \\ 
\rowcolor{black!15}\tikzmark{end6}0 & 0 &1& 0 &1\tikzmark{end5}\\ 

1 & 1 &1& 0 &0 \\ 
1 & 0 &1& 0 &0 \\ 
\rowcolor{black!15}\tikzmark{start8}0 & 1 &1&0 &0 \tikzmark{start7}\\ 
\rowcolor{black!15}\tikzmark{end7}0 & 0 &1& 0 &0 \tikzmark{end8} \\ 
\bottomrule 
\end{array}$ 
 \tikz[remember picture] \draw[overlay] ([yshift=.35em]pic cs:start5) -- ([yshift=.35em]pic cs:end5);
   \tikz[remember picture] \draw[overlay] ([yshift=.35em]pic cs:start6) -- ([yshift=.35em]pic cs:end6);
  \tikz[remember picture] \draw[overlay] ([yshift=.35em]pic cs:start7) -- ([yshift=.35em]pic cs:end7);
   \tikz[remember picture] \draw[overlay] ([yshift=.35em]pic cs:start8) -- ([yshift=.35em]pic cs:end8);

}
\parbox{.25\textwidth}{
\centering
$\begin{array}{ccccc}
\toprule	
x_1 & x_2 &y_1& y_2 &z_1 \\ 
 \midrule
1 & 1 &0 &1 &1 \\ 
1 & 0 &0 &1 &1 \\ 
0 & 1 &0 &1 &1 \\ 
0 & 0 &0 &1 &1 \\ 
\rowcolor{black!15}\tikzmark{start3}1 & 1 &0 &1 &0 \tikzmark{start4}\\ 
\rowcolor{black!15}1 & 0 &0 &1 &0 \\ 
\rowcolor{black!15}0 & 1 &0 &1 &0 \\ 
\rowcolor{black!15}\tikzmark{end4}0 & 0 &0 &1 &0 \tikzmark{end3}\\ 
\bottomrule 
\end{array}$ 
 \tikz[remember picture] \draw[overlay] ([yshift=.35em]pic cs:start3) -- ([yshift=.35em]pic cs:end3);
  \tikz[remember picture] \draw[overlay] ([yshift=.35em]pic cs:start4) -- ([yshift=.35em]pic cs:end4);
}
\parbox{.24\textwidth}{
\centering
$\begin{array}{ccccc}
\toprule	
x_1 & x_2 &y_1& y_2 &z_1 \\ 
 \midrule
\rowcolor{black!15} \tikzmark{start1}1 & 1 &0& 0 &1 \tikzmark{start2}\\ 
\rowcolor{black!15}1 & 0 &0& 0 &1 \\ 
\rowcolor{black!15}0 & 1 &0& 0 &1 \\ 
\rowcolor{black!15}0 & 0 &0& 0 &1 \\ 
\rowcolor{black!15}1 & 1 &0& 0 &0 \\ 
\rowcolor{black!15}1 & 0 &0& 0 &0 \\ 
\rowcolor{black!15}0 & 1 &0& 0 &0 \\ 
\rowcolor{black!15}\tikzmark{end2}0 & 0 &0& 0 &0 \tikzmark{end1}  \\ 
\bottomrule 
\end{array}$ 
 \tikz[remember picture] \draw[overlay] ([yshift=.35em]pic cs:start1) -- ([yshift=.35em]pic cs:end1);
   \tikz[remember picture] \draw[overlay] ([yshift=.35em]pic cs:start2) -- ([yshift=.35em]pic cs:end2);
 }
\end{table}

Similarly to the decision problem for repetition-free inclusion atoms \cite{CASANOVA198429}, the decision problem for inclusion atoms is PSPACE-complete. 

\begin{theorem}[Complexity]
   Let $\Sigma\cup \{x\subseteq y\}$ be a finite set of inclusion atoms. Deciding whether $\Sigma\models x\subseteq y$ is PSPACE-complete. \label{PSPACE_rep}
\end{theorem}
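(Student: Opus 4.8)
The plan is to reduce the implication problem for inclusion atoms with repetitions to the known PSPACE-completeness result for repetition-free inclusion atoms \cite{CASANOVA198429}. For hardness, observe that repetition-free inclusion atoms are a special case of inclusion atoms, so the lower bound transfers immediately: any instance $\Sigma\cup\{x\subseteq y\}$ of the repetition-free implication problem is also an instance of the general one, and by \Cref{complincl} (whose completeness proof did not assume repetition-freeness) the answer is the same. Hence the problem is PSPACE-hard.

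For membership in PSPACE, I would use \Cref{substNF}. Given a finite set $\Sigma\cup\{x\subseteq y\}$, first compute the finite set $\Gamma$ of derivable equalities $x_i\equiv x_j$ among the (finitely many) variables occurring in $\Sigma\cup\{x\subseteq y\}$; this is a fixpoint computation over a polynomial-size set of candidate equalities, using $I1$, $I6$, and the substitution lemma, and is feasible in polynomial time (hence PSPACE). Using $\Gamma$ together with \Cref{substNF}\cref{NF}, rewrite every atom in $\Sigma$ and the atom $x\subseteq y$ into provably equivalent atoms with distinct variables on the right-hand side; by $I3$ and $I4$ one may further project out, on each atom, the effect of left-hand-side repetitions, reducing (modulo $\Gamma$) to repetition-free atoms over the same variable set. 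One then checks: (a) that all equalities needed for $x\subseteq y$ (i.e. the pairs $x_i\equiv x_j$ forced by repetitions appearing in $y$) already lie in $\Gamma$, and (b) that the repetition-free residual of $x\subseteq y$ is implied by the repetition-free residuals of $\Sigma$, invoking the PSPACE algorithm of \cite{CASANOVA198429}. Correctness of this reduction is exactly the content of \Cref{substNF} and \Cref{complincl}: a proof of $x\subseteq y$ from $\Sigma$ exists iff the reduced repetition-free instance is valid and the relevant equalities are derivable.

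The main obstacle I anticipate is making the normal-form reduction of \Cref{substNF}\cref{NF} simultaneously correct \emph{and} polynomially bounded: each application of the substitution lemma can be stated cleanly for a single atom, but one must verify that rewriting all of $\Sigma$ at once does not blow up the number or arity of atoms, and that the "project out left-hand repetitions" step genuinely preserves implication in both directions once $\Gamma$ is fixed. A clean way to handle this is to argue semantically via \Cref{complincl} rather than syntactically: show that $\Sigma\models x\subseteq y$ iff $\Gamma_\Sigma \supseteq \Gamma_{x\subseteq y}$ and $\Sigma^{\mathrm{rf}}\models (x\subseteq y)^{\mathrm{rf}}$, where the superscript denotes the repetition-free residual over the quotient variable set modulo derivable equalities, and each of these two conditions is decidable in PSPACE (the first trivially, the second by \cite{CASANOVA198429}). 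Since PSPACE is closed under polynomial-time reductions and conjunction, this yields the upper bound, and together with hardness, PSPACE-completeness.
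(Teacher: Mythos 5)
Your hardness argument is fine and is exactly the paper's (repetition-free instances are a special case, so the lower bound of \cite{CASANOVA198429} transfers). Your membership argument also follows the paper's broad strategy: compute the derivable equalities by a polynomial fixpoint, quotient and normalize via \Cref{substNF}, and hand the result to the Casanova--Fagin--Vardi PSPACE algorithm. But there is one step that is genuinely wrong: the claim that one can ``project out, on each atom, the effect of left-hand-side repetitions'' using $I3$ and $I4$, reducing everything to repetition-free atoms. Left-hand-side repetitions paired with \emph{distinct} right-hand-side variables carry irreducible semantic content. For example, $x_1x_1\subseteq y_1y_2$ is strictly stronger than the pair $x_1\subseteq y_1$, $x_1\subseteq y_2$: the team $T=\{s_1,s_2\}$ with $s_1(x_1y_1y_2)=(0,0,1)$ and $s_2(x_1y_1y_2)=(1,1,0)$ satisfies both projections but not $x_1x_1\subseteq y_1y_2$, since $(0,0)\notin T[y_1y_2]$. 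Consequently your proposed equivalence ``$\Sigma\models x\subseteq y$ iff the equality condition holds and $\Sigma^{\mathrm{rf}}\models(x\subseteq y)^{\mathrm{rf}}$'' fails already for $\Sigma=\{x_1\subseteq y_1,\,x_1\subseteq y_2\}$ and goal $x_1x_1\subseteq y_1y_2$: your reduced instance is a ``yes'' instance while the original is a ``no'' instance. Weakening the assumptions in this way is also lossy once such atoms interact with derivable equalities elsewhere in $\Sigma$.

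The paper avoids this by eliminating only \emph{right}-hand-side repetitions, which genuinely do decompose into an equality plus a shorter atom (\Cref{substNF}\cref{NF}), and by keeping left-hand-side repetitions intact in $\Sigma^*$ and in $x\subseteq y^*$. The observation that makes this work is that, once all derivable equalities have been precomputed and substituted, none of the rules treats a repeated left-hand-side variable differently from a fresh one ($I5$ and $I6$ become inert), so the standard PSPACE algorithm of \cite{CASANOVA198429} can be run on $\Sigma^*$ and $x\subseteq y^*$ as they stand. If you replace your projection step with this observation, the rest of your argument goes through.
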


\begin{proof}
  PSPACE-hardness is immediate by the corresponding result for repetition-free inclusion in \cite{CASANOVA198429}. We sketch the proof for PSPACE membership.

 First, identify equal variables by scanning all of $\Sigma$ for repeated variables on the right-hand side of the atoms. Collect these into a list of pairs, e.g., $(x_2,x_2)$, $(z_1,z_1)$.
 Scan $\Sigma$ again, and if the variables from some pair on the list both appear on the right-hand side of an atom, add the pair of corresponding variables on the left-hand side to the list, e.g. if $z_2yu_3\subseteq x_2yx_2\in\Sigma$ then we add $(z_2,u_3)$ (motivated by the rules $I2$-$I4$).
 Iterate this process until no more pairs can be added to the list. Since $\Sigma$ is finite,
 the procedure necessarily terminates in at most $|Var|^2$ steps, where each step is in linear time. 

    With the list of equalities, construct $\Sigma^*$ by replacing all equal variables in $\Sigma$ with exactly one of them, and modify the atoms further such that no variable is repeated on the right-hand side of the atoms, as outlined in \Cref{substNF} \cref{NF}. Rewrite also $x\subseteq y$ in the same way, and denote it by $x\subseteq y^*$. 
    
    Now, with the equalities given, $\Sigma^*$ is equivalent to $\Sigma$,    
    and contains no repetitions of variables on the right-hand side of its atoms. Since the complete system for inclusion atoms has no rule specific to repetitions on the left-hand side, we can use the usual PSPACE algorithm in \cite{CASANOVA198429} to check whether $\Sigma^*$ entails $x\subseteq y^*$.
\qed\end{proof}

We end this section by showing that, unlike inclusion atoms in the first-order setting \cite{inclARMST}, not even unary inclusion atoms have Armstrong relations in the Boolean setting. 
We say that a set of dependencies $\Sigma$ has Armstrong relations if for all $\Sigma'\subseteq\Sigma$ there is a team such that only the atoms entailed by $\Sigma'$ are satisfied.

\begin{theorem}[Nonexistence of Armstrong relations] There is a set $\Sigma'$ of propositional inclusion atoms such
that no team satisfies exactly the propositional inclusion atoms that $\Sigma'$ semantically entails.
\end{theorem}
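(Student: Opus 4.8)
The plan is to exhibit a concrete finite set $\Sigma'$ of propositional inclusion atoms and argue by a pigeonhole-style counting argument that no single team can realize precisely the entailment-closure of $\Sigma'$. The natural candidate is a family of atoms over variables $p_1,\dots,p_n$ (for suitably large $n$) consisting of unary atoms $p_i\subseteq p_j$ that, by transitivity ($I2$) and the completeness theorem, forces a prescribed partial order (or preorder) on the ``value-behaviour'' of the variables, but is arranged so that no two distinct variables are forced to be equivalent. Concretely, I would consider something like $\Sigma'=\{p_i\subseteq p_{i+1}\mid 1\le i<n\}$, or an antichain-style set, and compute which atoms it entails. The point is that in the Boolean setting the behaviour of a single variable $p_i$ in a team $T$ is captured by the pair $(b\mapsto b\in T[p_i])$, i.e. by the subset $T[p_i]\subseteq\{0,1\}$, of which there are only four possibilities ($\emptyset$, $\{0\}$, $\{1\}$, $\{0,1\}$), and the empty one is irrelevant for a nonempty team, leaving three. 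Since $T\models p_i\subseteq p_j$ iff $T[p_i]\subseteq T[p_j]$, any team partitions the variables into at most three classes according to $T[p_i]$, and forces all the inclusions compatible with the containment order on these three sets.

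The key steps, in order, would be: (1) fix $\Sigma'$ and use \Cref{complincl} (completeness of $I1$--$I4$ in the Boolean setting) to describe exactly $\mathrm{Cn}(\Sigma')$, the set of atoms entailed by $\Sigma'$ — in particular to pin down which atoms $p_i\subseteq p_j$ are \emph{not} entailed, and to confirm that no nontrivial equality $p_i\equiv p_j$ is entailed; (2) suppose for contradiction that some team $T$ satisfies exactly the atoms in $\mathrm{Cn}(\Sigma')$; (3) observe that $T$ must be nonempty (since some non-tautological atom is \emph{not} satisfied, e.g. a unary atom outside $\mathrm{Cn}(\Sigma')$, and the empty team satisfies everything), so each $T[p_i]$ is one of the three nonempty subsets of $\{0,1\}$; (4) apply pigeonhole: if $n\ge 4$, two distinct variables $p_i,p_j$ get $T[p_i]=T[p_j]$, whence $T\models p_i\subseteq p_j$ and $T\models p_j\subseteq p_i$, and in fact (via $I3$/$I5$/repetition reasoning, or just directly from $T[p_i]=T[p_j]$) $T\models p_ip_j\subseteq p_ip_j$ together with the cross-inclusions forces the ``equality'' atom $p_ip_j\subseteq p_ip_i$ to hold in $T$; (5) derive the contradiction by checking that $\Sigma'$ was chosen so that $p_i\subseteq p_j\notin\mathrm{Cn}(\Sigma')$ for the relevant pair, or that the equality atom is not entailed. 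One must be slightly careful: to get a clean contradiction from just $T[p_i]=T[p_j]$ it is cleanest to pick $\Sigma'$ so that for \emph{every} pair $i\ne j$, at least one of $p_i\subseteq p_j$, $p_j\subseteq p_i$ is unentailed — an antichain $\Sigma'=\emptyset$ over four variables already does this, since then $\mathrm{Cn}(\emptyset)$ contains only trivial atoms $x\subseteq x$, yet any nonempty team over $\{p_1,\dots,p_4\}$ must satisfy some nontrivial $p_i\subseteq p_j$ by pigeonhole, while the empty team satisfies all atoms including non-tautologies. So the simplest witness is $\Sigma'=\emptyset$ (or $\Sigma'$ containing only a tautology) over four propositional variables.

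The argument would then be organized as: let $\Sigma'=\{p_1\subseteq p_1\}$ over variables $p_1,p_2,p_3,p_4$ (so $\mathrm{Cn}(\Sigma')$ consists exactly of the trivial atoms $x\subseteq x$ and atoms derivable from them, none of which is of the form $p_i\subseteq p_j$ with $i\ne j$). If a team $T$ realized $\mathrm{Cn}(\Sigma')$, then since $p_1\subseteq p_2\notin\mathrm{Cn}(\Sigma')$ we get $T\ne\emptyset$; then $T[p_1],\dots,T[p_4]$ are four nonempty subsets of $\{0,1\}$, of which there are only three, so $T[p_i]=T[p_j]$ for some $i\ne j$, giving $T\models p_i\subseteq p_j$ — contradicting $p_i\subseteq p_j\notin\mathrm{Cn}(\Sigma')$.

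The main obstacle, modest as it is, is making the first step precise: correctly identifying $\mathrm{Cn}(\Sigma')$ and in particular justifying that the only atoms (of \emph{any} arity, with repetitions allowed) entailed by $\Sigma'$ are the trivial ones. This follows from \Cref{complincl} — an atom $x\subseteq y$ with $\Sigma'\vdash x\subseteq y$ and $\Sigma'$ containing no nontrivial atoms can only be derived using $I1$, $I3$, $I5$ starting from trivial atoms, and one checks these rules preserve triviality (each produces an atom whose left side is a permutation-with-repetition of variables matched identically on the right). Alternatively one sidesteps this entirely by only needing that \emph{some} specific non-tautological atom, e.g. $p_1\subseteq p_2$, is not entailed (clear, since the generic team over $\{p_1,\dots,p_4\}$ — all $16$ assignments — refutes it) while being satisfied by the empty team; that single observation plus the pigeonhole count is enough. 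So the proof is short, and the only real care needed is the bookkeeping that four nonempty subsets of a two-element set cannot be pairwise distinct.
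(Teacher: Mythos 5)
Your argument is correct and reaches the same conclusion by a genuinely different route. The paper takes $\Sigma'=\{p_1\subseteq p_2\}$ over three variables and argues by cases: if $T\models\Sigma'$ and $T\not\models p_3\subseteq p_2$, then $T[p_2]$ is forced to be a singleton, which drags $T[p_1]$ along with it and yields $T\models p_2\subseteq p_1$; neither $p_3\subseteq p_2$ nor $p_2\subseteq p_1$ is entailed. You instead take an essentially empty $\Sigma'$ over four variables and run a pigeonhole count on the three nonempty subsets of $\{0,1\}$, forcing $T[p_i]=T[p_j]$ for some $i\neq j$ and hence an unentailed unary atom. Both proofs ultimately exploit the same finiteness of $\mathcal{P}(\{0,1\})$, but yours is more robust (it needs no case analysis and only the trivial fact that $p_i\subseteq p_j$, $i\neq j$, is not entailed by a tautology), while the paper's uses one fewer variable and a nonempty, nontrivial $\Sigma'$, which makes the failure of the Armstrong property look less like an artefact of the empty assumption set. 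Your observation that one need not characterize $\mathrm{Cn}(\Sigma')$ in full, only rule out the finitely many unary atoms $p_i\subseteq p_j$, is exactly the right simplification. One slip to fix: the generic team of all $16$ assignments does \emph{not} refute $p_1\subseteq p_2$ --- there $T[p_i]=\{0,1\}$ for every $i$, so it satisfies every unary inclusion atom. To witness $\Sigma'\not\models p_i\subseteq p_j$ use instead a singleton team with $s(p_i)=0$ and $s(p_j)=1$; with that correction the proof goes through.
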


\begin{proof}
    Let $Prop=\{p_1,p_2,p_3\}$ and $\Sigma'=\{p_1\subseteq p_2\}$. We show that any team satisfying $\Sigma'$ must satisfy either $p_3\subseteq p_2$ or $p_2\subseteq p_1$, neither of which is entailed by $\Sigma'$.     If $T\models p_3\subseteq p_2$, we are done. So suppose that $T\models\Sigma'$ and $T\not\models p_3\subseteq p_2$. Then there is a value $a\in \{0,1\}$ such that $a\in T[p_3]\setminus T[p_2]$. W.l.o.g. let $a=0$. Now by $T\models p_1\subseteq p_2$, we have that $T[p_1]\subseteq T[p_2]=\{1\}$. Thus, $T\models p_2\subseteq p_1$.
\qed\end{proof}

\section{Inclusion with Boolean constants} \label{sec bool}

We remain in the Boolean setting and consider inclusion atoms with Boolean constants of the form $p_1\dots p_n\subseteq q_1\dots q_n$, where $p_1\dots p_n,q_1\dots q_n$ are either propositional variables or the constants $\top$ and $\bot$. We introduce a complete system, show that the decision problem is PSPACE-complete, and conclude that there is no complete $k$-ary system for inclusion atoms with Boolean constants.

We extend the scope of assignments to the constants $\top$ and $\bot$.

\begin{definition}
For any assignment $s:Var\cup\{\top,\bot\}\longrightarrow \{0,1\}$, we have that 
$s(\top)=1$ and $s(\bot)=0$ always hold. The semantic clause for inclusion atoms with Boolean constants is as before: 
\begin{equation*}T\models p\subseteq q \text{ if and only if for all }
s\in T,\text{ there exists } s'\in T \text{ such that } s(p)=s'(q).\end{equation*}
\end{definition}

In this setting, in addition to trivial atoms, we also have \emph{contradictory} atoms that are only satisfied in the empty team, e.g.,  $\bot\subseteq \top$. We cannot express contradictions with the usual propositional inclusion atoms, making the inclusion atoms with Boolean constants strictly more expressive.

We define the system for inclusion atoms with Boolean constants by extending the system in \Cref{Inclusion full rules}.
Hereforth we exclusively use $\mathsf{x}:=\mathsf{x}_1\dots\mathsf{x}_n$ to denote a sequence of constants $\top$ and $\bot$. Otherwise, we do not exclude the constants $\top$ and $\bot$ from the sequences $p,q,r$, or any other variable sequence, unless explicitly mentioned.

We say that $\mathsf{x}_1\dots\mathsf{x}_n\subseteq r_1\dots r_n$ is \emph{consistent} if for all $i,j\in \{1,\dots, n\}$, $r_i\in \{\top,\bot\}$ implies $r_i= \mathsf{x}_i$, and additionally, if $r_i=r_j$, then $\mathsf{x}_i=\mathsf{x}_j$. Now $\top\bot\subseteq \top p_2$ and $\top\top\subseteq p_2 p_2$ are consistent, while  $\top\bot\subseteq\bot p_2$ and $\top\bot\subseteq p_2 p_2$ are not.

\begin{definition}\label{rulesBoolean}
The rules for inclusion atoms with Boolean constants are the rules $I1$-$I6$ from \Cref{Inclusion full rules} together with $B1$, $B2$ and $B3$.

    \begin{enumerate}[align=left]
\item[$(B1)$] If $\top\subseteq \bot$, then $q\subseteq r$. 

\item[$(B2)$] If $p\subseteq q$, then $p\top\subseteq q \top$ and $p\bot\subseteq q \bot$.

\item[$(B3)$] Let $n:=|p|$ and $A\subseteq\{r\subseteq q\mid r_i\in \{p_i, \top,\bot\}, 1\leq i\leq n\}$ be a minimal set such that for any $\mathsf{x}$ with consistent $\mathsf{x}\subseteq p$, there is some $r\subseteq q \in A$ such that $\mathsf{x}\subseteq r$ is consistent. If $A$, then $p \subseteq q$. 
\end{enumerate} 
\end{definition}

The rule $B1$ allows us to derive any atom from $\top\subseteq\bot$, since it is only satisfied by the empty team. We also derive any atom from the assumption $\bot\subseteq\top$, since we have $\vdashx{I1} \top\subseteq\top$, and $\bot\subseteq\top\vdashx{B2} \bot\top\subseteq\top\top$, thus $\top\subseteq\bot$ follows by $I6$. The rule $B2$ is from the larger system for propositional inclusion logic in \cite{yang_propositional_2022}, and allows us to add constants to both sides of any inclusion atom. Note that in general $p\subseteq q\not\models pr\subseteq qr$. 

The schema $B3$ has many instances, and we will closely examine one of the more involved ones in \Cref{nonfiniteaxiom}. A simple application of the schema is if all $\top\bot$-combinations are included in $q$, then any sequence is included in $q$. Another example is that $p\subseteq q$ follows from $\top p_2\dots p_n\subseteq q_1\dots q_n$ and $\bot p_2\dots p_n\subseteq q_1\dots q_n$.

\begin{lemma}
    The rules $B1$, $B2$ and axiom schema $B3$ are sound.
\end{lemma}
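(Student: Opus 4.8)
The plan is to verify soundness of each rule separately, treating $B1$, $B2$, and $B3$ in turn. Throughout, fix a team $T$ over $Var\cup\{\top,\bot\}$ where every assignment satisfies $s(\top)=1$ and $s(\bot)=0$, and recall $T[p]\subseteq T[q]$ is equivalent to $T\models p\subseteq q$.

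For $B1$: suppose $T\models \top\subseteq\bot$. Then for every $s\in T$ there is $s'\in T$ with $s(\top)=s'(\bot)$, i.e.\ $1=0$, which is impossible unless $T=\emptyset$. Since every inclusion atom is satisfied by the empty team (the empty team property, recalled in \Cref{sec prel}), $T\models q\subseteq r$ for any $q,r$. For $B2$: suppose $T\models p\subseteq q$, and consider $p\top\subseteq q\top$ (the $p\bot\subseteq q\bot$ case is symmetric). Take any $s\in T$; there is $s'\in T$ with $s(p)=s'(q)$, and since $s(\top)=s'(\top)=1$ we get $s(p\top)=s'(q\top)$. Hence $T\models p\top\subseteq q\top$.

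The main work is soundness of $B3$, and this is the step I expect to be the main obstacle. Assume the hypothesis: $A\subseteq\{r\subseteq q\mid r_i\in\{p_i,\top,\bot\}\}$ is a (minimal) set such that for every constant sequence $\mathsf{x}$ with $\mathsf{x}\subseteq p$ consistent, there is some $r\subseteq q\in A$ with $\mathsf{x}\subseteq r$ consistent, and assume $T\models r\subseteq q$ for all $r\subseteq q\in A$. We must show $T\models p\subseteq q$, i.e.\ $T[p]\subseteq T[q]$. Fix $s\in T$ and let $\mathsf{x}$ be the constant sequence recording the actual values $s(p)$, i.e.\ $\mathsf{x}_i=\top$ if $s(p_i)=1$ and $\mathsf{x}_i=\bot$ otherwise. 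The key observation is that $\mathsf{x}\subseteq p$ is consistent: whenever $p_i\in\{\top,\bot\}$ we have $s(p_i)$ forced to match $p_i$, so $\mathsf{x}_i=p_i$; and whenever $p_i=p_j$ we have $s(p_i)=s(p_j)$, so $\mathsf{x}_i=\mathsf{x}_j$. By the hypothesis on $A$, pick $r\subseteq q\in A$ with $\mathsf{x}\subseteq r$ consistent. I then claim $s(p)=s(r)$: for each coordinate $i$, $r_i\in\{p_i,\top,\bot\}$; if $r_i=p_i$ then trivially $s(r_i)=s(p_i)$; if $r_i\in\{\top,\bot\}$ then consistency of $\mathsf{x}\subseteq r$ forces $r_i=\mathsf{x}_i$, and by construction of $\mathsf{x}$, $s(p_i)$ equals the value named by $\mathsf{x}_i$, which is $s(r_i)$. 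Hence $s(p)=s(r)\in T[r]$. Since $T\models r\subseteq q$, i.e.\ $T[r]\subseteq T[q]$, we conclude $s(p)\in T[q]$, and as $s$ was arbitrary, $T[p]\subseteq T[q]$. (Minimality of $A$ plays no role in soundness; it only matters for the shape of the completeness argument, so I would note it is not needed here.) The subtle point, and the part I would write out most carefully, is the bookkeeping in the claim $s(p)=s(r)$: one must check that the two flavours of "consistency" — consistency of $\mathsf{x}\subseteq p$ (used to know $\mathsf{x}$ is a legitimate witness picking out some $r$) and consistency of $\mathsf{x}\subseteq r$ (used to pin down the constant coordinates of $r$) — interact correctly, and that $\mathsf{x}$ being literally the value-pattern of $s$ on $p$ is exactly what bridges them.
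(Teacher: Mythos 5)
Your proposal is correct and follows essentially the same route as the paper: $B1$ via the empty-team property, $B2$ by direct verification, and $B3$ by taking the constant sequence $\mathsf{x}$ recording $s(p)$, invoking the covering property of $A$ to find $r\subseteq q\in A$ with $\mathsf{x}\subseteq r$ consistent, and concluding $s(p)=s(r)\in T[r]\subseteq T[q]$. Your write-up is in fact slightly more explicit than the paper's (which leaves the consistency of $\mathsf{x}\subseteq p$ and the identity $s(p)=s(r)$ implicit), and your remark that minimality of $A$ is irrelevant to soundness is accurate.
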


\begin{proof}
Proving soundness of $B1$ and $B2$ is straightforward. We show the claim for $B3$. Let $A$ be as stated in the premise of the rule, and assume that $T\models r \subseteq q$, for all $r\subseteq q\in A$. Let $s\in T$, and consider the sequence $\mathsf{x}$ for which $s(p)=s(\mathsf{x})$. By the definition of $A$, there is some $r\subseteq q\in A$ such that $\mathsf{x}\subseteq r$ is consistent. Thus we find some $s'\in T$, for which  $s(p)= s(\mathsf{x})=s'(r)\subseteq T[q]$. Hence, $T\models p \subseteq q$.
\qed\end{proof}

We generalise the completeness proof of \Cref{complincl} to inclusion atoms with Boolean constants. \Cref{ct3} shows an instance of the counterexample team.

\begin{theorem}[Completeness]\label{CompletenessBoolean}
 Let $\Sigma\cup \{p\subseteq q\}$ be a set of inclusion atoms with Boolean constants. If $\Sigma\models p\subseteq q$, then $\Sigma\vdash p\subseteq q$.
\end{theorem}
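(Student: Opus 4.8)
The plan is to mimic the construction in the proof of \Cref{complincl}, building a counterexample team of $01$-assignments, but now taking care of the constraints imposed by the constants $\top$ and $\bot$ appearing in the atoms, and of the new rules $B1$--$B3$. Assume $\Sigma\not\vdash p\subseteq q$; in particular, by $B1$, neither $\top\subseteq\bot$ nor $\bot\subseteq\top$ is derivable, so $\Sigma$ is ``satisfiable'' in the sense that its generic team (defined below) is nonempty. First I would reduce, as in \Cref{substNF}\cref{NF}, so that no variable or constant is repeated on the right-hand side of any atom of $\Sigma$, replacing repetitions by derivable equalities $u_i\equiv u_j$ (now also covering equalities of a variable with a constant, e.g.\ $u_i\equiv\top$); the rules $I1$--$I6$ together with $B2$ suffice to carry out and invert this reduction. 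Then I would define $s\in T$ iff: (1) $s:Var\cup\{\top,\bot\}\to\{0,1\}$ with $s(\top)=1$, $s(\bot)=0$; (2) $s$ respects all derivable equalities, i.e.\ if $\Sigma\vdash u_iu_j\subseteq u_ku_k$ then $s(u_i)=s(u_j)$ (which, via $\top\equiv u_i$ or $\bot\equiv u_i$, pins certain variables to constant values); and, provided there is no ``forced inconsistency'' on the right-hand side of $p\subseteq q$ (the analogue of the clause where we ``stop here'' in \Cref{complincl}, namely indices $j,k$ with $q_j\equiv q_k$ but $p_j\not\equiv p_k$, or some $q_j\equiv\mathsf{x}_j$ but $p_j\not\equiv\mathsf{x}_j$), (3) for every $w$ with $\Sigma\vdash w\subseteq q$, $s(w)\neq a$, where $a$ is a carefully chosen ``target'' tuple on the positions of $q$.

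The key novelty is the choice of the forbidden value $a$ in condition (3). In \Cref{complincl} the constant $0^n$ worked because it is never deleted by the equality condition and hence is always present in $T[p]$. Here, because $p$ may itself contain constants, I must choose $a$ so that $s(p)=a$ for the (unique, since $s$ is fixed on constants and respects equalities) candidate assignment, and so that this $a$ actually occurs in $T$; concretely $a$ should be the tuple obtained from some consistent $\mathsf{x}\subseteq p$ — I pick the one where every ``free'' coordinate of $p$ (not forced to a constant by the equalities) is set to the value that the rest of the construction leaves available. The point where $B3$ enters is exactly here: if \emph{every} consistent $\mathsf{x}\subseteq p$ had its value forbidden by some $w\subseteq q$ derivable from $\Sigma$, then the set of those $w$'s (thinned to a minimal subset) would be an instance of the premise of $B3$, yielding $\Sigma\vdash p\subseteq q$, a contradiction; so some consistent $\mathsf{x}\subseteq p$ survives, and its value is my $a\in T[p]\setminus T[q]$, giving $T\not\models p\subseteq q$.

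It then remains to verify $T\models u\subseteq v$ for each $u\subseteq v\in\Sigma$, following the case analysis of \Cref{complincl}: if $v$'s coordinates are unconstrained by equalities then $T[v]$ contains all $01$-combinations consistent with $v$'s constants, so inclusion is immediate; if some coordinates of $v$ are forced equal (to each other or to a constant) then by $I2$--$I6$ the same constraints hold for the corresponding coordinates of $u$, so nothing missing from $T[v]$ is in $T[u]$; and in the remaining case, for each $w$ with $V(w)\subseteq V(v)$ and $\Sigma\vdash w\subseteq q$ we derive $u\subseteq v\vdash u'\subseteq w$ using $I3$, $I4$, $I5$ (and $B2$ to handle constant positions), whence $\Sigma\vdash u'\subseteq q$; since $\Sigma\not\vdash p\subseteq q$ we get $u'\neq p$ even after substituting derivable equalities, so condition (3) keeps $a\notin T[u']$ and the argument closes as before.

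The main obstacle I anticipate is getting the definition of the forbidden tuple $a$ and the ``stop here'' clause exactly right so that (i) $a$ is genuinely realised in $T$ (not accidentally deleted by condition (2)), (ii) the failure of condition (3) to be vacuous is precisely what an instance of $B3$ rules out — this is where the minimality requirement in $B3$ and the interplay between constants and derivable equalities must be handled with care — and (iii) the derivations $u\subseteq v\vdash u'\subseteq w$ still go through when constants sit in $u$ or $v$, which is where $B2$ (and the reduction of \Cref{substNF}) does the bookkeeping. Everything else is a routine adaptation of \Cref{complincl}.
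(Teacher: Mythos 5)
Your proposal is correct and follows essentially the same route as the paper: a $\{0,1\}$-valued counterexample team respecting the derivable equalities (now including variable–constant equalities), with the forbidden tuple chosen as $s(\mathsf{x})$ for a consistent $\mathsf{x}\subseteq p$ whose existence is guaranteed by the contrapositive of $B3$, and the verification of $\Sigma$ carried out as in \Cref{complincl}. The only cosmetic difference is your preliminary normalization of $\Sigma$ via \Cref{substNF}, which the paper defers to the complexity proof rather than using here.
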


\begin{proof}
Suppose that $\Sigma\not\vdash p\subseteq q$. Then both $p=q$ and $\Sigma\vdash \top\subseteq\bot$ would be contradictory by $I1$ and $B1$, so suppose neither is the case. Let $n:=|p|$. 

We first show that under the assumption $\Sigma\not\vdash p\subseteq q$, there is a sequence $\mathsf{x}$ such that $\Sigma\not\vdash \mathsf{x}\subseteq q$ and $\mathsf{x}\subseteq p$ is consistent. Consider sequences $r$ such that $r_i\in \{p_i, \top,\bot\}$, $1\leq i\leq n$. 
If there is no $r$ such that $\Sigma\vdash r\subseteq q$, then let $c:=s(\mathsf{x})$ for some $\mathsf{x}$ for which $\mathsf{x}\subseteq p$ is consistent, where $s$ is any assignment. Otherwise, lest we derive $\Sigma\vdash p\subseteq q$ by $B3$, there is some $n$-sequence $\mathsf{x}$ such that $\mathsf{x}\subseteq p$ is consistent but $\mathsf{x}\subseteq r$ is not consistent for any $r$ with $\Sigma\vdash r\subseteq q$. In particular, $\Sigma\not\vdash \mathsf{x}\subseteq q$, so we set $c:=s(\mathsf{x})$, where again $s$ is any assignment.

Build the team $T$ by letting $s\in T$ if the following conditions are met.
\begin{enumerate}[label=(\arabic*)]
    \item $s:V\cup\{\top,\bot\}\longrightarrow \{0,1\}$,\label{Cond 1}
    \item If $\Sigma\vdash u_i u_j\subseteq u_ku_k$, then $s(u_i)= s(u_j)$.\label{Cond 2}
    \item[] If there are indecies $j,k$ such that $q_j\equiv q_k$ and $p_j\not\equiv p_k$, then we stop here. Otherwise, demand also: 
    \item For $w$ such that $\Sigma\vdash w\subseteq q$: $s(w)\neq c$. \label{Cond 3}
\end{enumerate} 

Condition \ref{Cond 2} handles equalities not only between variables, but also between variables and constants, since, e.g., $\Sigma\vdash u_i\subseteq\top \vdashx{B2} u_i\top\subseteq\top\top$.

The intuition as to why the counterexample team works is the same as in the case of \Cref{complincl}: it is as close to the generic team as possible, i.e., it satisfies conditions \ref{Cond 1} and \ref{Cond 2}, with a minimal intervention by condition \ref{Cond 3} ensuring that $T\not\models p\subseteq q$. Thus, showing that $T \models u\subseteq v$ for all $u\subseteq v\in \Sigma$ is analogous to the proof of \Cref{complincl}.
\qed\end{proof}

\begin{table}[ht]
\caption{The counterexample team in the proof of \Cref{CompletenessBoolean} for the consequence $p_1p_2\subseteq q_1 q_2$ and assumption set $\Sigma=\{\bot\bot\subseteq q_1q_2,\, \top\top\subseteq q_1q_2,\, \bot\top\subseteq q_1q_2\}$ consists of all the non-crossed out lines. Here, condition \ref{Cond 3} is applied with $c= 10$.}\label{ct3}
\parbox{.24\textwidth}{
\centering
$\begin{array}{ccccc}
\toprule	
p_1& p_2 &q_1& q_2 &r_1 \\ 
 \midrule
1 & 1 &1& 1 &1 \\ 
 1& 0 &1& 1 &1 \\ 
 0& 1 &1& 1 &1 \\ 
0 & 0 &1& 1 &1\\ 

1 & 1 &1& 1 &0 \\ 
1 & 0 &1& 1 &0 \\ 
0 & 1 &1& 1 &0 \\ 
0 & 0 &1& 1 &0 \\ 
\bottomrule 
\end{array}$ 
}
\parbox{.25\textwidth}{
\centering
$\begin{array}{ccccc}
\toprule	
p_1 & p_2 &q_1& q_2 &r_1 \\ 
 \midrule
\rowcolor{black!15}\tikzmark{start20}1 & 1 &1& 0 &1 \tikzmark{start21}\\ 
\rowcolor{black!15} 1& 0 &1& 0 &1 \\ 
\rowcolor{black!15} 0& 1 &1& 0 &1 \\ 
\rowcolor{black!15}0 & 0 &1& 0 &1\\ 

\rowcolor{black!15}1 & 1 &1& 0 &0 \\ 
\rowcolor{black!15}1 & 0 &1& 0 &0 \\ 
\rowcolor{black!15}0 & 1 &1&0 &0 \\ 
\rowcolor{black!15}\tikzmark{end21}0 & 0 &1& 0 &0 \tikzmark{end20}\\ 
\bottomrule 
\end{array}$ 
 \tikz[remember picture] \draw[overlay] ([yshift=.35em]pic cs:start20) -- ([yshift=.35em]pic cs:end20);
   \tikz[remember picture] \draw[overlay] ([yshift=.35em]pic cs:start21) -- ([yshift=.35em]pic cs:end21);
}
\parbox{.25\textwidth}{
\centering
$\begin{array}{ccccc}
\toprule	
p_1 & p_2 &q_1& q_2 &r_1 \\ 
 \midrule
1 & 1 &0 &1 &1 \\ 
1 & 0 &0 &1 &1 \\ 
0 & 1 &0 &1 &1 \\ 
0 & 0 &0 &1 &1 \\ 
1 & 1 &0 &1 &0 \\ 
1 & 0 &0 &1 &0 \\ 
0 & 1 &0 &1 &0 \\ 
0 & 0 &0 &1 &0 \\ 
\bottomrule 
\end{array}$ 
}
\parbox{.24\textwidth}{
\centering
$\begin{array}{ccccc}
\toprule	
p_1 & p_2 &q_1& q_2 &r_1 \\ 
 \midrule
1 & 1 &0& 0 &1 \\ 
1 & 0 &0& 0 &1 \\ 
0 & 1 &0& 0 &1 \\ 
0 & 0 &0& 0 &1 \\ 
1 & 1 &0& 0 &0 \\ 
1 & 0 &0& 0 &0 \\ 
0 & 1 &0& 0 &0 \\ 
0 & 0 &0& 0 &0 \\ 
\bottomrule 
\end{array}$ 
}\end{table}

Next, we examine the axiom schema $B3$ from a computational complexity standpoint.

\begin{lemma}\label{co-NP-compl}
    Deciding whether $A\vdashx{B3} p\subseteq q$ is co-NP complete.
\end{lemma}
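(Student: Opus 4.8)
The plan is to first reduce the derivability question ``$A\vdashx{B3} p\subseteq q$'' to a purely combinatorial \emph{covering} condition, then read off membership in co-NP from that condition, and finally prove hardness by a direct reduction from the unsatisfiability problem for CNF formulas.

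\textbf{Step 1: a combinatorial characterization.} Fix $n:=|p|$ and let
$A^*:=\{\,r\subseteq q\in A\mid r_i\in\{p_i,\top,\bot\}\text{ for all }i\leq n\,\}$. Call a set $B$ of atoms, all of the form $r\subseteq q$, a \emph{cover of $p$} if for every $\mathsf{x}$ with $\mathsf{x}\subseteq p$ consistent there is some $r\subseteq q\in B$ with $\mathsf{x}\subseteq r$ consistent. I would prove that $A\vdashx{B3}p\subseteq q$ if and only if $A^*$ is a cover of $p$. The ``if'' direction is immediate: from any cover one passes to a minimal cover, which is a legitimate premise set for a single application of $B3$ concluding $p\subseteq q$. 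For ``only if'' I would induct on the length of a $B3$-only derivation of $p\subseteq q$ from $A$. The last step applies $B3$ with premises $r^1\subseteq q,\dots,r^t\subseteq q$ whose left-hand sides form a cover of $p$; each premise $r^l\subseteq q$ is either already in $A$ (hence in $A^*$, since premises of $B3$ have the required form $r^l_i\in\{p_i,\top,\bot\}$), or it has a strictly shorter $B3$-only derivation, to which the induction hypothesis applies with $r^l$ playing the role of $p$; this yields that $\{\,s\subseteq q\in A\mid s_i\in\{r^l_i,\top,\bot\}\,\}$ is a cover of $r^l$, and since $\{r^l_i,\top,\bot\}\subseteq\{p_i,\top,\bot\}$ for every $i$, that set is contained in $A^*$. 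Unwinding the definitions of consistency then shows that any $\mathsf{x}$ consistent with $p$ which is ``routed through'' such an $r^l$ in the last step is already covered by $A^*$; hence $A^*$ covers $p$. I expect this inductive bookkeeping — getting the interaction of the consistency relation with the minimality clause of $B3$ exactly right, so that iterated applications of $B3$ genuinely buy nothing — to be the main obstacle; the rest is routine.

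\textbf{Step 2: membership in co-NP.} By Step 1 the complement of the problem is: \emph{there exists} $\mathsf{x}$ with $\mathsf{x}\subseteq p$ consistent such that $\mathsf{x}\subseteq r$ is inconsistent for every $r\subseteq q\in A^*$. Such an $\mathsf{x}$ is a sequence of length $n$, hence a polynomial-size certificate; checking that $\mathsf{x}\subseteq p$ is consistent, extracting $A^*$ from $A$, and checking inconsistency of $\mathsf{x}\subseteq r$ for each of the at most $|A|$ atoms of $A^*$ are all polynomial-time. So the complement is in NP and the problem is in co-NP.

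\textbf{Step 3: co-NP-hardness.} I would reduce from CNF-unsatisfiability. Given $\varphi=C_1\wedge\dots\wedge C_m$ over variables $x_1,\dots,x_k$ (w.l.o.g.\ no clause contains a complementary pair of literals), put $p:=p_1\cdots p_k$ and $q:=q_1\cdots q_k$ using $2k$ distinct fresh variables, and for each clause $C_j$ let $r^j$ be the length-$k$ sequence with $r^j_i:=\bot$ if $x_i$ occurs positively in $C_j$, $r^j_i:=\top$ if $x_i$ occurs negatively in $C_j$, and $r^j_i:=p_i$ otherwise; set $A:=\{\,r^j\subseteq q\mid 1\leq j\leq m\,\}$. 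Then $A^*=A$, the $\mathsf{x}$ with $\mathsf{x}\subseteq p$ consistent are exactly the $2^k$ constant sequences (in bijection with truth assignments $\alpha$), and $\mathsf{x}\subseteq r^j$ is consistent precisely when $\alpha$ falsifies $C_j$. Hence $A$ covers $p$ iff every assignment falsifies some clause, i.e.\ iff $\varphi$ is unsatisfiable; by Step 1 this is equivalent to $A\vdashx{B3}p\subseteq q$, and the reduction is plainly polynomial-time. (If one prefers to avoid leaning on Step 1 for this direction, ``$\varphi$ satisfiable $\Rightarrow A\not\vdashx{B3}p\subseteq q$'' also follows from soundness of $B3$: a satisfying assignment $\alpha$ gives the team of all $s$ with $s(q)\neq s(\mathsf{x}^\alpha)$ and $s(p)$ arbitrary, which satisfies every $r^j\subseteq q$ but not $p\subseteq q$.) Combining Steps 2 and 3 gives co-NP-completeness.
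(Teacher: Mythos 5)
Your proposal is correct and follows essentially the same route as the paper: co-NP membership by guessing a consistent $\mathsf{x}$ not covered by any left-hand side in $A$, and co-NP-hardness by a reduction from CNF-unsatisfiability in which each clause becomes a premise $r^j\subseteq q$ whose constants encode the clause's falsifying pattern (the paper uses 3-CNF with block-disjoint positions, but the construction is the same in substance). Your Step 1, which reduces $\vdashx{B3}$-derivability to the covering condition and checks that iterated or non-minimal applications of $B3$ add nothing, makes explicit a point the paper leaves implicit, and is a welcome addition rather than a deviation.
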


\begin{proof}
 Let $A$ be an assumption set as stated in the schema $B3$, and let $|p|=n$. 

For co-NP-hardness, we reduce the co-NP-complete (\cite{cook}) negation of the 3-SAT problem. Let $\phi=(a_1\lor a_2 \lor a_3)\land \dots \land  (a_{n-2}\lor a_{n-1}\lor a_{n})$ be a conjunction of Boolean disjunctive clauses with three disjuncts, where $a_i\in \{p_i, \neg p_i\}$, $1\leq i\leq n$. Note that we allow $p=p_1\dots p_n$ to have repeated variables.   

To determine whether $\phi\not\in SAT$, associate assignments $s$ over the variables in $p$ to sequences $\mathsf{x}$ in the obvious way: $\mathsf{x}_i=\top$ iff $s(p_i)=1$.  For $1\leq i\leq n$, define 
 \[   
\overbar{a_i} = 
     \begin{cases}
       \bot &\quad\text{if } a_i=p_i,\\
       \top &\quad\text{if } a_i=\neg p_i. \\
     \end{cases}
\]

Let $r$ be a $n$-sequence of fresh propositional symbols. The reduction is completed through the equivalences between the following statements.
\begin{enumerate} [label=\textbf{-}]

    \item  $(a_1\lor a_2 \lor a_3)\land \dots \land  (a_{n-2}\lor a_{n-1}\lor a_{n})\not\in SAT$.

 \item For all assignments $s$ corresponding to the sequence $\mathsf{x}$, there is some $i\in \{1,4,\dots, n-2\}$ such that $\mathsf{x}_i \mathsf{x}_{i+1} \mathsf{x}_{i+2}= \overbar{a_i}\overbar{ a_{i+1}} \overbar{a_{i+2}}$.

 \item  For all $\mathsf{x}$ where $\mathsf{x}\subseteq p$ is consistent, there is some $i\in \{1,4,\dots, n-2\}$ such that $\mathsf{x}_i \mathsf{x}_{i+1} \mathsf{x}_{i+2} = \overbar{a_i}\overbar{ a_{i+1}} \overbar{a_{i+2}}$.

    \item  $A\vdashx{B3} p\subseteq r$, where     
    \[
  A = \left\{ b\subseteq r\ \middle\vert \begin{array}{l}
    \text{there is exactly one $i\in \{1,4,\dots, n-2\}$ such that } \\
\text{$b_ib_{i+1}b_{i+2}=\overbar{a_i}\overbar{a_{i+1}} \overbar{a_{i+2}}$
    and $b_j=p_j$ when $j\not\in\{i,i+1,i+2\}$} 
  \end{array}\right\}.
\]
\end{enumerate}

For co-NP-membership, it suffices to show that $A\not\vdashx{B3} p\subseteq q$ is in NP. We sketch the proof. Guess a sequence $\mathsf{x}$ for which $x\subseteq p$ is consistent. Check whether there is some $u$ appearing as the left-hand side of some atom in $A$ such that $\mathsf{x}\subseteq u$ is consistent. If no, the algorithm accepts $\mathsf{x}$ as the witness, and we can find a certificate for $A\not\vdashx{B3} p\subseteq q$ in polynomial time. 
\qed\end{proof}

We conclude this section by showing that the decision problem for inclusion atoms with Boolean constants is PSPACE-complete. 

\begin{theorem}[Complexity]
   Let $\Sigma\cup \{p\subseteq q\}$ be a finite set of inclusion atoms with Boolean constants. Deciding whether $\Sigma\vdash p\subseteq q$ is PSPACE-complete.
\end{theorem}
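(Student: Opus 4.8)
The plan is to get PSPACE-hardness for free and to obtain PSPACE-membership by adapting the reachability algorithm of \cite{CASANOVA198429} (and of \Cref{PSPACE_rep}), with the schema $B3$ as the one genuinely new difficulty. PSPACE-hardness is immediate: repetition-free inclusion atoms without the constants $\top,\bot$ form a subclass of the atoms considered here, and their implication problem is already PSPACE-hard \cite{CASANOVA198429}; by the completeness theorem \Cref{CompletenessBoolean} the derivability problem $\Sigma\vdash p\subseteq q$ and the entailment problem $\Sigma\models p\subseteq q$ coincide, so I may work with whichever is more convenient.

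First I would perform a polynomial-time preprocessing that disposes of everything except $B3$. As in the proof of \Cref{PSPACE_rep}, compute the closure of the variables-and-constants occurring in $\Sigma\cup\{p\subseteq q\}$ under the derivable equalities, now additionally recording the identifications forced by $B2$ (for instance $u_i\subseteq\top$ yields $u_i\equiv\top$) and by contradictory atoms. If this closure ever identifies $\top$ with $\bot$, then $\Sigma$ derives every atom — as observed after \Cref{rulesBoolean}, $\bot\subseteq\top$ (equivalently $\top\subseteq\bot$) together with $B2$, $I6$ and $B1$ yields everything — so we accept. Otherwise, replace each equivalence class by a representative and, using \Cref{substNF} \cref{NF}, rewrite $\Sigma$ and $p\subseteq q$ so that no variable is repeated on a right-hand side; the rule $B2$ can be absorbed by noting that constants on the right can be stripped by $I4$ and that only boundedly many trailing constants are ever useful. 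What remains is an ``explosion-free'' instance with repetition-free right-hand sides whose derivations use $I1$-$I6$, $B1$, $B2$ and $B3$.

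On such an instance I would run a Savitch-style reachability search: the states encode a partial assignment of elements of $Var\cup\{\top,\bot\}$ to the positions of the goal atom, together with the bookkeeping needed for $B1$ and $B2$. These states have polynomial size, and the transitions corresponding to $I1$-$I6$, $B1$ and $B2$ are polynomial-time checkable exactly as in \cite{CASANOVA198429}; since reachability over polynomially sized states with a PSPACE-decidable edge relation is in PSPACE, the only remaining issue is $B3$. The hard part will be incorporating $B3$, which is the unique rule of unbounded arity: one of its applications branches over all consistent constant instantiations of its left-hand sides and hence is not captured by a single reachability path. I would handle this with two ingredients — (i) \Cref{co-NP-compl}, which shows that verifying a single step $A\vdashx{B3} p\subseteq q$ lies in co-NP and hence in PSPACE once the members of $A$ have been established, and (ii) a normal-form lemma constraining where and how often $B3$ occurs in a canonical derivation, so that the search can enumerate the finitely many induced branches with a polynomial-size counter and recurse on the residual sub-instances without exceeding polynomial space. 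Pinning down this normal form, and proving that the induced recursion terminates within polynomial space, is where the real work lies; the rest is a routine adaptation of the classical construction. Combining the preprocessing with this search gives the PSPACE upper bound, and together with the hardness above the problem is PSPACE-complete.
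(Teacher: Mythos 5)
Your proposal follows essentially the same route as the paper: PSPACE-hardness inherited from the repetition-free case, a polynomial-time preprocessing that normalizes away derivable equalities (now including those between variables and the constants $\top,\bot$) via \Cref{substNF} and $B2$, and an extension of the algorithm of \cite{CASANOVA198429} in which the $B3$-steps are certified using \Cref{co-NP-compl}. The paper's own argument is likewise only a sketch at exactly the point you flag as ``the real work'' --- integrating the unbounded-arity schema $B3$ into the polynomial-space search --- so your account is, if anything, more explicit about what remains to be verified there.
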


\begin{proof}

PSPACE-hardness follows from the corresponding result for inclusion atoms in \cite{CASANOVA198429}. We give a sketch of PSPACE-membership. Extend the procedure in the proof of \Cref{PSPACE_rep} to also list equalities between variables and constants, by adding, e.g., $(p_1,\top)$ to the list if there is an atom with $p_1$ on the left-hand side in the same relative position as $\top$ on the right-hand side. We again modify $\Sigma$ to $\Sigma^*$, such that no atom in $\Sigma^*$ has repeated variables or constants on its right-hand side. Given the list of equalities, the sets $\Sigma$ and $\Sigma^*$ are equivalent by \Cref{substNF} \cref{NF} and $B2$. We also modify $p\subseteq q$ in the same way and denote it by $p\subseteq q^*$.

Now it suffices to check whether $\Sigma^*\models  p\subseteq q^*$, hence we only need to extend the PSPACE algorithm in \cite{CASANOVA198429} to entailments with constants on the left-hand side in the atoms, i.e., entailments corresponding to $B3$, which by \Cref{co-NP-compl} do not exceed PSPACE. 
\qed\end{proof}

\subsection{No $k$-ary axiomatization}

The complete proof system defined in \Cref{rulesBoolean} is large, due to the axiom schema $B3$ not having an upper bound on the number of assumptions its instances can require. We show that this is necessary, i.e., no complete $k$-ary proof system exists for inclusion atoms with Boolean constants.

Let $p$ and $q$ be $n$-sequences of distinct propositional variables. Define $B$ by 
  \[
  B = \left\{ r\subseteq q\ \middle\vert \begin{array}{l}
      \text{$r=\bot^n$, or there is exactly one $i\in \{1,\dots, n\}$} \\ \text{such that $r_i=\top$
    and  $r_j=p_j$ when $j\neq i$\, }  
  \end{array}\right\}.
\]

We consider the following instances of the schema $B3$: \begin{equation*} 
   \text{If $B$, then $p\subseteq q$.}\label{rule}\tag{*}
\end{equation*}

The number of assumptions, i.e., the arity, of the rule (\ref{rule}) is $n+1$ for atoms of arity $n$, and we show that it cannot be reduced. It follows that the arity of the system grows with the arity of the atoms considered. 

We show the claim by following the strategy in \cite{cond_ind}, and prove that no assumption can be derived from the others and that no non-trivial atom can be concluded from the assumption set, except for the conclusion of the rule. This is a general strategy to show the absence of $k$-ary axiomatizations, as per \cite{CASANOVA198429}. We end the section by sketching the proof.   

\begin{theorem}\label{nonfiniteaxiom}
    There is no $k$-ary axiomatization of inclusion atoms with Boolean constants.
\end{theorem}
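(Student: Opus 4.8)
Proof sketch. The plan is to argue by contradiction: suppose some sound and complete system $\mathcal S$ for inclusion atoms with Boolean constants is $k$-ary, and force a contradiction by working with atoms of arity $n\ge k$. Fix such an $n$, let $p,q$ be disjoint $n$-sequences of distinct variables, and let $B$ be the $(n+1)$-element set from rule $(\ref{rule})$; since $(\ref{rule})$ is an instance of the sound schema $B3$ we have $B\models p\subseteq q$, while $p\subseteq q$ is non-trivial and non-contradictory and no single member of $B$ entails it (a one-assignment team with $s(p_j)=0$ and $s(q)$ equal to $s(p)$ with coordinate $j$ set to $1$ satisfies the atom of $B$ carrying $\top$ in position $j$ but not $p\subseteq q$, and a similar team handles $\bot^n\subseteq q$). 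By completeness there is an $\mathcal S$-derivation from $B$ of some atom $\tau$ interderivable with $p\subseteq q$; take one of minimal length. Its last line $\tau$ is not in $B$, so it is obtained by an $\mathcal S$-rule from premises $\sigma_1,\dots,\sigma_m$ with $m\le k$, each derivable from $B$ (hence $B\models\sigma_l$), none interderivable with $p\subseteq q$ by minimality, and with $\{\sigma_1,\dots,\sigma_m\}\models\tau$ (soundness of that rule), hence $\{\sigma_1,\dots,\sigma_m\}\models p\subseteq q$.

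Following the template of \cite{cond_ind,CASANOVA198429}, the argument rests on two facts about $B$. First, \emph{non-redundancy}: no proper subset of $B$ entails $p\subseteq q$. It suffices to refute $B\setminus\{\beta\}\models p\subseteq q$ for each $\beta\in B$ by an explicit two-valued team: for $\beta=\bot^n\subseteq q$ use the $n$ assignments with $p$-value $0^n$ and $q$-values the unit vectors $e_1,\dots,e_n$; for the atom $\beta_i$ carrying $\top$ in position $i$ use a team with $T[p]=\{e_i\}$ and $T[q]=\{0^n\}\cup\{e_i\vee e_j\mid j\ne i\}$. Second, \emph{rigidity}: every non-trivial atom $\sigma$ with $B\models\sigma$ is either a semantic consequence of a single member of $B$, or interderivable with $p\subseteq q$. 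The easy half is that every proper projection or permutation of $p\subseteq q$ — indeed every sub-atom of some $\beta_i$ — follows from that $\beta_i$ alone, since $\beta_i$ agrees with $p\subseteq q$ off coordinate $i$, so a projection omitting coordinate $i$ is already a sub-atom of $\beta_i$. The converse half is the content: for any $\sigma$ not of this shape one exhibits a team $T\models B$ with $T\not\models\sigma$, by a case analysis on the form of $\sigma$ together with a suitable choice of $T[p]$ (e.g.\ $T[p]=\{1^n\}$, or a single carefully chosen tuple) and the minimal matching $T[q]$.

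All of these verifications are streamlined by the observation that $T\models B$ holds exactly when $T[q]\supseteq\{0^n\}\cup\{v\vee e_i\mid v\in T[p],\,1\le i\le n\}$ (coordinatewise $\vee$), with $T[p]$ otherwise unconstrained. Granting non-redundancy and rigidity, we finish: by rigidity and the minimality of the chosen derivation, each non-trivial $\sigma_l$ is a consequence of a single $\beta_{j(l)}\in B$, and the trivial $\sigma_l$ play no role in $\{\sigma_1,\dots,\sigma_m\}\models p\subseteq q$; hence $\{\beta_{j(l)}\mid l\le m\}\models p\subseteq q$, a subset of $B$ of size at most $m\le k<n+1=|B|$. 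Thus a proper subset of $B$ entails $p\subseteq q$, contradicting non-redundancy. Therefore no $k$-ary complete system can exist, and since $n$ was arbitrary above $k$, the arity must grow with the arity of the atoms.

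The main obstacle is the converse half of rigidity: ruling out that $B$ possesses some exotic non-trivial consequence that genuinely requires all $n+1$ assumptions and yet is not interderivable with $p\subseteq q$. The delicacy is that an arbitrary $\sigma$ over this vocabulary may reuse the $p$- and $q$-variables, repeat variables, and mix in $\top$ and $\bot$, so the case analysis on the shape of $\sigma$ needs to be organised carefully; the redeeming point is that, once the shape of $\sigma$ is pinned down, the closure description of the $B$-models makes each individual counterexample team short to write down and verify.
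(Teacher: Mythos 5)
Your architecture is the same as the paper's: you take the very same assumption set $B$ and the same instance (\ref{rule}) of $B3$, and you reduce the theorem to (i) non-redundancy of $B$ and (ii) a ``rigidity'' claim saying that every non-trivial consequence of $B$ is either a consequence of a single member of $B$ or interderivable with $p\subseteq q$. Your surrounding scaffolding is in fact slightly more carefully articulated than the paper's: the minimal-derivation argument for why (i) and (ii) together kill any $k$-ary system is spelled out explicitly, your phrasing via ``interderivable with $p\subseteq q$'' cleanly absorbs permuted variants of the conclusion, your non-redundancy teams are correct ($T[p]=\{0^n\}$, $T[q]=\{e_1,\dots,e_n\}$ against $\bot^n\subseteq q$, and $T[p]=\{e_i\}$, $T[q]=\{0^n\}\cup\{e_i\vee e_j\mid j\neq i\}$ against $\beta_i$), and the closure description $T\models B$ iff $T[q]\supseteq\{0^n\}\cup\{v\vee e_i\mid v\in T[p]\}$ is a genuinely useful reformulation that the paper only uses implicitly.

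The gap is that the ``converse half of rigidity'' — which you correctly identify as the main obstacle — is exactly where the entire substance of the theorem lies, and you do not carry it out. The paper's appendix is devoted to precisely this: an exhaustive case analysis producing, for every non-trivial $u\subseteq v$ not interderivable with $p\subseteq q$ and not a consequence of a single $\beta_i$, a team satisfying $B$ but not $u\subseteq v$. Two sub-issues in particular are not addressed by your sketch. First, your countermodel recipe is phrased in terms of choosing $T[p]$ and the minimal matching $T[q]$, which implicitly assumes $\sigma$ has right-hand side $q$; the paper must (and does) separately justify the reduction to $v=q$, via \Cref{all of q} and an analysis of which rules could combine several members of $B$, and must also dispose of candidate consequences containing fresh variables outside $p$ and $q$ (its cases with the variables $r_j$). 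Second, the case analysis over left-hand sides mixing $p$-variables, $q$-variables, $\top$, $\bot$, and repetitions runs to roughly ten distinct team constructions, several of which require a specific choice of which coordinate to perturb; none of these is exhibited. So the proposal is a correct plan matching the paper's, but as written it asserts rather than proves the one claim that does all the work.
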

\begin{proof}
Consider the $n$-ary atom $p\subseteq q$, and let $B$ be as in the statement of the rule (\ref{rule}). Then $B\vdash p\subseteq q$. We show that any complete proof system necessarily includes a rule for this derivation with $n+1$ assumptions.
It suffices to prove the following two claims: 
\begin{enumerate}
    \item[(1)] No assumption is derived from the others.
    \item[(2)] The only nontrivial atom derivable from the assumption set $B$ is the conclusion  $p\subseteq q$.
\end{enumerate}

We easily build teams witnessing (1).
For (2), we reduce the set of nontrivial atoms we need to consider to ones of the form $u\subseteq q\neq p\subseteq q$. It suffices to show that for each $u\subseteq q$, there is a team that satisfies the atoms in $B$ but not $u\subseteq q$. There are many distinct cases based on the configuration of the variables and constants appearing in $u$. Still, two types of teams can be used (with some modification) in all cases. The first team is $T_1=\{s_1, s_2\}$ with $s_1(pq)=1^{2n}$ and $s_2(pq)=1^n0^n$. The second team $T_2=\{s_1,\dots s_{n+1}\}$ is defined by $T_2=\{s \mid s(p)=0^n \text{ and there is at most one $1\leq i\leq n$ such that } s(q_i)=1\}$. The team $T_1$ covers the cases when there are $i,j$ such that $u_i=\bot$ and $u_j\neq\bot$, and $T_2$ the cases when there are different $i,j$ such that $u_i=u_j=\top$. Both teams are illustrated in \Cref{team A}.\qed\end{proof}
%\vspace{-.3cm}

\begin{table}[ht]
\centering
\caption{The teams $T_1$ and $T_2$ from the proof of \Cref{nonfiniteaxiom}.}\label{team A}
$\begin{array}{c cccc cccc}
\toprule	
T_1:\,\,&p_1 & p_2 & \dots & p_n &q_1& q_2 & \dots &q_n \\ 
 \midrule
&1&1 & \dots &1& 1 &1& \dots &1 \\ 
&1&1 & \dots &1& 0 &0& \dots &0 \\ 
\bottomrule 
\end{array}$ \hspace{.5cm}
$\begin{array}{c cccc cccc}
\toprule	
T_1:\,\,& p_1 & p_2 & \dots & p_n &q_1& q_2 & \dots &q_n \\ 
 \midrule
&0&0 & \dots &0&     1 &0& \dots&0 \\ 
&0&0 &\dots &0&     0 &1& \dots &0 \\ 
&& &\ddots &&      && \ddots & \\ 
&0&0 & \dots &0&     0 &0& \dots& 1 \\ 
&0&0 &\dots &0&     0 &0& \dots &0 \\ 
\bottomrule 
\end{array}$ 
\end{table}

\section{Conclusion and future work} \label{sec conc}

We list the paper's key contributions regarding the implication problem for inclusion with repetitions and Boolean constants, and subsequently suggest directions for future work.  
\begin{enumerate}[label=$\bullet$]
    \item Providing an alternative completeness proof of standard inclusion dependencies for the usual system \cite{CASANOVA198429}, using only two values, thereby demonstrating completeness also in the Boolean setting.

    \item 
Extending the alternative completeness proof to inclusion atoms with repetitions for a system extracted from \cite{mitchell}.

\item Introducing a complete proof system for inclusion atoms with Boolean constants and confirming that no such system is $k$-ary.

%\item Showing that verifying the soundness of an application of the axiom schema $B3$, from the system for inclusion with Boolean constants, is co-NP-complete.

\item  Demonstrating that the decision problems for both extended inclusion atoms remain PSPACE-complete.
\end{enumerate}

The completeness proofs for the systems of functional dependence \cite{Galliani2014}, non-conditional independence \cite{Galliani2014}, and now also inclusion, only use two values, making the systems complete when restricted to the Boolean setting. This is not always the case; there are semantic entailments valid only in the Boolean setting for anonymity and exclusion atoms (see \cite{Vaananen2022} and \cite{haggblom2024} for their definitions). For anonymity and exclusion, the following entailments serve as examples that are only sound in the Boolean setting: $q\Upsilon p, p\Upsilon r_1\models r_1\Upsilon p$ and $p_1p_2|p_2p_1\models p_1|p_2$. Extending the systems in \cite{Vaananen2022,haggblom2024} to complete systems in the Boolean setting for anonymity and exclusion atoms remains as future work.

Another use of our alternative completeness proof is in the axiomatization of quantity approximate inclusion atoms of the form $x\subseteq_n y$ in \cite{haggblom2025}, allowing $n$ values of $x$ to be missing from $y$. The alternative counterexample teams are advantageous compared to the existing ones when the degree to which they should not satisfy the assumed non-derivable atom needs to be carefully controlled.

Inclusion atoms with repetitions can express equalities between variables, making this a natural dependency class to consider. Similarly, for exclusion atoms, allowing repetitions increases the expressivity, and the rules for repetition-free exclusion in \cite{Casanova1983TowardsAS} must be extended to obtain the complete system provided in \cite{haggblom2024}. We could thus consider extending the system for the combined implication problem for repetition-free inclusion and exclusion in \cite{Casanova1983TowardsAS} to deal with repetitions. Additionally, the language of team-based propositional logic augmented with inclusion atoms is axiomatized in \cite{yang_propositional_2022}, creating a direction of future study: the axiomatization of propositional inclusion-exclusion logic.

In propositional inclusion logic \cite{yang_propositional_2022}, inclusion atoms with Boolean constants suffice to achieve the desired expressive completeness result. The modal logic setting presents a similar situation, where standard inclusion atoms alone are insufficient for the desired expressivity (see \cite{hella2015,MIL}). Even more extended inclusion atoms are needed, which allow modal logic formulas to function as ``variables'' within the atom, such as $p\subseteq \Diamond q$. A natural extension of this work would be to axiomatize and determine the computational complexity of the implication problems for more subclasses of these extended inclusion atoms.

\begin{credits}
\subsubsection{\ackname}
The author is grateful for helpful suggestions from Åsa Hirvonen, Fan Yang, Miika Hannula and Juha Kontinen. The connection between the axiom schema $B3$ and the complexity class co-NP was Miika Hannula's observation.

\subsubsection{\discintname}
The author has no competing interests to declare that are
relevant to the content of this article.
\end{credits}

\bibliographystyle{splncs04}
\bibliography{mybibliography}

\section*{Appendix}

We give the full proof of \Cref{nonfiniteaxiom}, showing that there is no complete $k$-ary system for inclusion atoms with Boolean constants due to the instance (\ref{rule}) of the axiom schema $B3$.

First, we prove that it is not possible for a set to semantically entail an inclusion atom with $q$ on its right-hand side if no atom in the assumption set contains all of $q$ on its right-hand side.

\begin{lemma}\label{all of q}
    Let $\Sigma$ be a set of inclusion atoms with Boolean constants such that 
\begin{enumerate}[label=$\bullet$]
    \item $\Sigma$ does not contain contradictory atoms, and
    \item  all $u\subseteq v\in \Sigma$ are such that $V(u)\cap V(q)=\emptyset$ and $V(q)\not\subseteq V(v)$.
\end{enumerate}
    Then $\Sigma\not\models u\subseteq q$ for non-trivial consequences $u\subseteq q$.  
    In particular, $\{p^\prime\subseteq q^\prime\mid p^\prime, q^\prime\text{ proper subsequences of }p,q\}\not\models p\subseteq q$.
    %
    % a non-contradictory set    in which all $u\subseteq v\in \Sigma$ are such that $v$ does not contain all variables in $q$ and $V(u)\cap V(q)=\emptyset$, then $\Sigma\not\models u\subseteq q$ for non-trivial $u\subseteq q$.     
    % In particular, $\{p^\prime\subseteq q^\prime\mid p^\prime, q^\prime\text{ proper subsequences of }p,q\}\not\models p\subseteq q$.
\end{lemma}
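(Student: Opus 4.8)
The goal is to build, for each nontrivial consequence $u\subseteq q$, a team $T$ that satisfies every atom of $\Sigma$ but fails $u\subseteq q$. The key leverage is the second hypothesis: for every $u'\subseteq v'\in\Sigma$ we have $V(u')\cap V(q)=\emptyset$, so the variables of $q$ never appear on a left-hand side in $\Sigma$, and $V(q)\not\subseteq V(v')$, so no assumption ``sees'' all of $q$ on its right-hand side. This means the values of the $q$-tuple are only weakly constrained by $\Sigma$, and we can tamper with them freely to break $u\subseteq q$.

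The plan is as follows. First I would fix a witness of nontriviality: since $u\subseteq q$ is nontrivial, $u\neq q$ (as sequences), and also, since $\Sigma$ has no contradictory atoms, $u\subseteq q$ is not contradictory unless $\Sigma\models$ everything, which is excluded — so we may assume $u\subseteq q$ is satisfiable by a nonempty team. Start from a ``generic'' team $T_0$ over the variables occurring in $\Sigma\cup\{u\subseteq q\}$, taking all assignments into $\{0,1\}$ that respect the derivable equalities forced by repeated variables/constants on right-hand sides of atoms in $\Sigma$ (as in the proofs of \Cref{complincl} and \Cref{CompletenessBoolean}); this $T_0$ satisfies all of $\Sigma$. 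The point is then to delete (or recolour) assignments so that some value $c$ of $u$ present in $T$ is not a value of $q$ in $T$, while keeping every $u'\subseteq v'\in\Sigma$ satisfied. Because $V(q)$ is disjoint from every left-hand side $V(u')$, removing an assignment on account of its $q$-values never removes a needed left-hand value for an assumption; and because $V(q)\not\subseteq V(v')$ for every assumption, for any assignment $s$ there is always another assignment $s'$ agreeing with $s$ on $V(v')$ but differing on the "missing" coordinate of $q$, so the witness $s'$ for $u'\subseteq v'$ survives. Concretely I would pick a coordinate $i$ and a truth value and forbid that combination on $q_i$ across $T$, choosing things so that the resulting $T[u]$ still contains a tuple missing from $T[q]$; the case split is on whether $u$ differs from $q$ in a variable coordinate or in a constant coordinate, but in both cases the freedom on the $q$-side makes the construction go through. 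Finally, verify $T\not\models u\subseteq q$ directly from the chosen forbidden value.

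For the ``in particular'' clause, take $\Sigma=\{p'\subseteq q'\mid p',q'\text{ proper subsequences of }p,q\}$ with $p,q$ sequences of distinct variables. Each such $p'\subseteq q'$ has $V(p')\subseteq V(p)$ disjoint from $V(q)$, and $q'$ a proper subsequence of $q$ so $V(q)\not\subseteq V(q')$; also none of these atoms is contradictory (they are repetition-free inclusion atoms over distinct variables). So the hypotheses of the lemma apply with consequence $p\subseteq q$, which is nontrivial since $p\neq q$, giving $\Sigma\not\models p\subseteq q$.

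\textbf{Main obstacle.} The delicate point is the case analysis ensuring that the single ``tampering'' with the $q$-coordinates simultaneously (a) genuinely falsifies $u\subseteq q$ — i.e. leaves a $u$-value with no matching $q$-value — and (b) preserves satisfaction of \emph{all} assumptions, including those whose right-hand side $v'$ shares some, but not all, variables with $q$, and including the bookkeeping forced by condition \ref{Cond 2}-style derivable equalities among repeated right-hand variables and constants. I expect this to require care in choosing which coordinate of $q$ to restrict (it must be one on which $u$ already "wants" a value that we can exclude), and in checking that the equality-respecting structure of $T_0$ is compatible with that exclusion; the argument is otherwise parallel to the completeness proofs already given.
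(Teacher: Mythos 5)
Your overall plan is the same as the paper's: the paper simply reuses the counterexample-team construction from the completeness proofs (conditions \ref{cond 1} and \ref{cond 3}, resp.\ \ref{Cond 1}--\ref{Cond 3}), observing that under the two hypotheses the only sequence $w$ with $\Sigma\vdash w\subseteq q$ is $q$ itself, so the only tuple excluded from attaining the forbidden value is $q$, while every $u\neq q$ (and every right-hand side of an assumption) still attains it. You correctly identify the two structural facts doing the work: $V(q)$ never meets a left-hand side, and every assumption misses some coordinate of $q$ on its right-hand side, which supplies the surviving witness.

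There is, however, one concrete mis-step in your execution: you propose to ``pick a coordinate $i$ and a truth value and forbid that combination on $q_i$ across $T$.'' Forbidding a single coordinate value is too strong and would break the very assumptions you need to preserve. An assumption $u'\subseteq v'$ is allowed to contain \emph{some} variables of $q$ on its right-hand side (only $V(q)\not\subseteq V(v')$ is assumed); if $q_i\in V(v')$ and you force $T[q_i]=\{1\}$, then since $u'_j$ is a free variable (disjoint from $V(q)$) it still takes the value $0$ somewhere, and $u'\subseteq v'$ fails. The correct move, as in the paper, is to delete only those assignments $s$ with $s(q)$ equal to one specific full tuple $c$ (the paper uses $0^{|q|}$, or a consistent $c$ when $u$ contains constants). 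This removes a ``thin'' slice: for any assumption $u'\subseteq v'$ and any target value of $v'$, the missing coordinate $q_j\notin V(v')$ can be set to avoid $c$, so a witness survives; and since no nontrivial $w\subseteq q$ with $w\neq q$ is derivable from $\Sigma$, every $u\neq q$ still realises $c$, falsifying $u\subseteq q$. With that correction your argument coincides with the paper's. The ``in particular'' clause is handled exactly as you describe.
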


\begin{proof}
If all sequences on the right-hand sides of the atoms are repetition-free and without constants, then build the counterexample team for $u\subseteq q$ according to the proof of \Cref{complincl} using conditions \ref{cond 1} and \ref{cond 3}, observing that now the only sequence not obtaining the value $0^{|q|}$ is $q$. Else, use the counterexample team in the proof of \Cref{CompletenessBoolean} and argue analogously. 
\qed\end{proof}

We can now provide the full proof of \Cref{nonfiniteaxiom}. 

\medskip
 \noindent\textbf{\Cref{nonfiniteaxiom}}  \emph{There is no complete $k$-ary axiomatization of inclusion atoms with Boolean constants.}
\begin{proof}
We consider the rule (\ref{rule}) with the conclusion $p\subseteq q$, $|p|=n$, assuming $V(u)\cap V(q)=\emptyset$ and that $p$ and $q$ are repetition-free sequences of propositional symbols. First, we note that one can easily build teams showing that the whole assumption set (and the conclusion of the rule) need not be satisfied in a team in which only a proper subset of the assumption set is satisfied. Thus, a possible reduction of the arity of the rule is equivalent to the existence of a nontrivial atom derivable from the whole assumption set that is not the conclusion of the rule. We show that this is impossible. In the context of this proof, a trivial consequence would be one derivable using only the unary rules from the system. 

Let $u\subseteq v\neq p\subseteq q$. We only consider cases where $v=q$ due to the following arguments. By the complete proof system in \Cref{rulesBoolean}, any non-trivial transformation of the assumptions would have to involve either $I2$ or $B3$. The rule $I2$ is not applicable to our assumption set, and using $B3$ on strict subset of the assumption set results in atoms of the form $p^\prime\subseteq q^\prime$, with $p^\prime, q^\prime$ being proper subsequences of $p,q$, and by \Cref{all of q} such a collection would be strictly less expressive than the original assumption set. Therefore, we assume that $v=q$ and show that for all such $u\subseteq q$, we can build teams that satisfy the assumptions in the rule (\ref{rule}) but not $u\subseteq q$. We consider all possible variable configurations of $u$ one by one while excluding the previous cases, with the general strategy being to construct teams with constant values for the variables not in $q$, and then produce the minimal team satisfying the assumptions in the rule (\ref{rule}).

We first consider the cases when there is some $r_j$ in $u$, where $r_j$ is a fresh propositional variable not appearing in the sequences $p$ and $q$. This shows that introducing fresh variables cannot help reduce the arity of the rule.\footnote{This is in contrast to the system for inclusion and functional dependencies combined, which is usually considered to not have a $k$-ary axiomatization \cite{CASANOVA198429}, but by allowing the introduction of fresh variables, a $k$-ary system is obtained in  \cite{mitchell}.}

\begin{enumerate}[resume,label=(\alph*)]
\item \label{item aa}
If all variables in $u$ are $r_j$, then build a team where all assignments give $r_j$ the value $1$ and $p$ the value $0^n$. Now the team in \Cref{team r const} shows the claim, by satisfying all assumptions in the rule but not $u\subseteq q$, since $u$ obtains the value $1^n$ while $q$ does not.

  \begin{table}[h]
\centering \caption{} \label{team r const}
$\begin{array}{c cccc cccc}
\toprule	
r_j&p_1 & p_2 & \dots & p_n &q_1& q_2 & \dots &q_n \\ 
 \midrule
1&0&0 & \dots &0&     1 &0& \dots&0 \\ 
1&0&0 &\dots &0&     0 &1& \dots &0 \\ 
& & &\ddots&      && &\ddots  \\ 
1&0&0 & \dots &0&     0 &0& \dots& 1 \\ 
1&0&0 &\dots &0&     0 &0& \dots &0 \\ 
\bottomrule 
\end{array}$ 
\end{table} %\vspace{-.2cm}

\item 
If there are variables $r_i$ and $r_j$ in $u$ with $i\neq j$, then we build a team in which the value for $r_j$ is $0$ and the remaining variables in $r$ and $p$ are $1$. Now $u$ will have a value with both $0$ and $1$, which is never the case for $q$. This case is illustrated in \Cref{team r}.

  \begin{table}[h]
\centering \caption{}\label{team r}
$\begin{array}{ccc cc cccc cccc}
\toprule	
r_1 & \dots & r_j& \dots&  r_m & p_1 & p_2 & \dots & p_n &q_1& q_2 & \dots &q_n \\ 
 \midrule
1& \dots & 0& \dots&  1 & 1&1 & \dots &1& 1 &1& \dots &1 \\ 
1 & \dots & 0& \dots&  1 & 1&1 & \dots &1& 0 &0& \dots &0 \\ 
\bottomrule 
\end{array}$ 
\end{table}%\vspace{-.2cm}

\item If there is only one $r_j$ and at least one variable from $\{\top, p_i, q_l\}$ in $u$, then we set $r_j$ to be $0$ and build the same team as in the previous case (\Cref{team r}). 

\item The last case is if there is only one $r_i$ and at least one $\bot$ among the variables in $u$. We set $r_i$ to $1$ and $p$ to $1^n$. The team in \Cref{team r} witnesses the claim for the case $r_i=r_1$, since $u$ will obtain a value with both $0$ and $1$, which is never the case for $q$.
\end{enumerate}

Now we can assume that $u_i\in\{p_1,\dots,p_n,q_1,\dots,q_n,\top,\bot\}$ for all $1\leq i\leq n$.

First, let us consider the cases when there is some $u_i$ such that $u_i=\bot$.

\begin{enumerate}[resume,label=(\alph*)]
    \item If there is between $1$ and $n-1$ many $\bot$ in $u$, then we can build the team illustrated in \Cref{team a} (with $p$ set to $1^n$), where $u$ necessarily has a value with both $0$ and $1$, while that is never the case for $q$. \label{item_a} 
    \begin{table}[h]
\centering \caption{}\label{team a}
$\begin{array}{cccc cccc}
\toprule	
p_1 & p_2 & \dots & p_n &q_1& q_2 & \dots &q_n \\ 
 \midrule
1&1 & \dots &1& 1 &1& \dots &1 \\ 
1&1 & \dots &1& 0 &0& \dots &0 \\ 
\bottomrule 
\end{array}$ 
\end{table}%%\vspace{-.2cm}
\end{enumerate}

Now assume that $u_i\in\{p_1,\dots,p_n,q_1,\dots,q_n,\top\}$ for all $1\leq i\leq n$.

\begin{enumerate}[resume,label=(\alph*)]
    \item If $u$ contains some variable from $q$ and at least one $\top$, or variables from both $q$ and $p$, then the team in \Cref{team a}  shows the claim, using an analogous argument to \cref{item_a}. 

    \item \label{item_c}   
    If $u_i\in\{p_1,\dots,p_n,\top\}$ for all $1\leq i\leq n$, and there is between $2$ and $n$ many $\top$ in $u$, then we can build \Cref{team c}, where $q$ obtains at most one $1$ on any given line. 
  \begin{table}[h]
\centering \caption{} \label{team c}
$\begin{array}{cccc cccc}
\toprule	
p_1 & p_2 & \dots & p_n &q_1& q_2 & \dots &q_n \\ 
 \midrule
0&0 & \dots &0&     1 &0& \dots&0 \\ 
0&0 &\dots &0&     0 &1& \dots &0 \\ 
& &\ddots &&      && \ddots & \\ 
0&0 & \dots &0&     0 &0& \dots& 1 \\ 
0&0 &\dots &0&     0 &0& \dots &0 \\ 
\bottomrule 
\end{array}$ 
\end{table}%%\vspace{-.2cm}
\end{enumerate}

Now the remaining cases are when $u_i\in\{p_1,\dots,p_n,\top\}$ for all $1\leq i\leq n$ with at most one $\top$ in $u$, or $u_i\in\{q_1,\dots,q_n\}$ for all $1\leq i\leq n$. Let us first check the cases for $u_i\in\{q_1,\dots,q_n\}$ for all $1\leq i\leq n$.

\begin{enumerate}[resume,label=(\alph*)]
\item If some variable $q_j$ appears at least twice in $u$, then we can use the team in \Cref{team c}, now $u$ will have a value with two $1$'s, which $q$ never obtains. \label{item h}

\item If there are no repetitions of variables in $u$, then there is a first instance of some variable $q_j$ in the wrong position, lest $u\subseteq q$ is trivial. We set the corresponding $p_j$ to $1$, and the remaining $p_i$'s to $0$. Then $u$ will obtain a value with exactly one $1$, but such a value only occurs for $q$ once, but with $1$ in a different position. In the case of the first instance of a variable in the wrong position being $q_2$, \Cref{team d} shows the claim.\label{item i}
\end{enumerate}

The final cases are of the form $u_i\in\{p_1,\dots,p_n,\top\}$ for all $1\leq i\leq n$.

\begin{enumerate}[resume,label=(\alph*)]
\item If all variables in $u$ are $p_j$, then build a team where $p_j$ is $1$ and the other $p_i$'s are $0$. We give an example in \Cref{team d} when $p_j=p_2$, for which $u$ has the value $1^n$ but $q$ does not. 
\vspace{-.2cm}
\begin{table}[h] \centering \caption{} \label{team d} $\begin{array}{ccccc ccccc} \toprule p_1 & p_2 & p_3 & \dots & p_n & q_1 & q_2 & q_3 & \dots & q_n \\ \midrule 0 & 1 & 0 & \dots & 0 & 1 & 1 & 0 & \dots & 0 \\ 
0 & 1 & 0 & \dots & 0 & 0 & 1 & 0 & \dots & 0 \\ 
0 & 1 & 0 & \dots & 0 & 0 & 1 & 1 & \dots & 0 \\ 
& & & \ddots & & & & &  \ddots& \\ 0 & 1 & 0 & \dots & 0 & 0 & 1 & 0 & \dots & 1 \\ 0 & 1 & 0 & \dots & 0 & 0 & 0 & 0 & \dots & 0 \\ \bottomrule \end{array}$ \end{table}
%%\vspace{-.2cm}

\item  \label{item e} If some $p_j$ appears at least twice in $u$, set its value to $0$, and the remaining $p_i$'s to $1$. This ensures that each value of $u$ has two $0$'s, which is never the case for $q$. The team in \Cref{team e} proves the claim when $p_j=p_1$.
 \begin{table}[h]
\centering\caption{} \label{team e}
$\begin{array}{cccc cccc}
\toprule	
p_1 & p_2 & \dots & p_n &q_1& q_2 & \dots &q_n \\ 
 \midrule
0&1 & \dots &1&     1 &1& \dots&1 \\ 
0&1 &\dots &1&     0 &1& \dots &1 \\ 
0&1 & \dots &1&     0 &0& \dots& 0 \\ 
\bottomrule 
\end{array}$ 
\end{table}
%%\vspace{-.2cm}
\item If there is no repetition of any $p_i$ in $u$, then there must be some $p_j$ that occurs in the wrong position, let it be $0$ and the remaining variables in $p$ be $1$. Now for $q$, the corresponding $q_j$ will never be $0$ (except for the line where $q$ has the value $0^n$). The team in \Cref{team e} proves the claim for $p_j=p_1$.
\end{enumerate}\vspace{-.2cm}
\qed\end{proof}

%
% ---- Bibliography ----
%
% BibTeX users should specify bibliography style 'splncs04'.
% References will then be sorted and formatted in the correct style.

\end{document}